\newtheorem{theorem}{Theorem}[section]
\newtheorem{lemma}{Lemma}[section]
\newtheorem{definition}{Definition}[section]
\newtheorem{example}{Example}
\newtheorem{remark}{Remark}
\numberwithin{equation}{section}
\def\qed{\hfill{$\Box$} \\}
\def\expect{{\mathbb  E}}
\def\Pr{{\mathbb P}}
\def\RR{{\mathbb R}}
\def\real{{\mathbb  R}}
\def\nat{{\mathbb  N}}
\def\eqdef{\triangleq}
\def\eqd{\stackrel{d}{=}}
\def\ind{{\bf 1}}
\renewcommand{\leq}{\leqslant}
\renewcommand{\geq}{\geqslant}
\begin{document}
\thispagestyle{plain} \pagestyle{plain}
\date{{\small January 2007; revised June 2007}}
\author{Predrag R. Jelenkovi\'c  \;\;  Xiaozhu Kang \;\; Jian Tan \\
\small Department of Electrical Engineering \\
\small Columbia University, New York, NY 10027 \\
\small  \{predrag, xiaozhu, jiantan\}@ee.columbia.edu\\
\small Tel.: (212) 854 8174  \;\; Fax:  (212) 932 9421 }

\renewcommand{\footnote}{}
\title{\bf{Heavy-Tailed Limits for Medium Size Jobs and Comparison Scheduling}\footnote{This work is supported by NSF Grant 0615126.}
 }

\maketitle \normalsize

\renewcommand{\thefootnote}{}

\begin{abstract}
We study the conditional sojourn time distributions of processor
sharing (PS), foreground background processor sharing (FBPS) and
shortest remaining processing time first (SRPT) scheduling
disciplines on an event where the job size of a customer arriving in
stationarity is smaller than exactly $k\ge 0$ out of the preceding
$m\ge k$ arrivals. Then, conditioning on the preceding event, the
sojourn time distribution of this newly arriving customer behaves
asymptotically the same as if the customer were served in isolation
with a server of rate $(1-\rho)/(k+1)$ for PS/FBPS, and $(1-\rho)$
for SRPT, respectively, where $\rho$ is the traffic intensity.
Hence, the introduced notion of conditional limits allows us to
distinguish the asymptotic performance of the studied schedulers by
showing that SRPT exhibits considerably better asymptotic behavior
for relatively smaller jobs than PS/FBPS.

Inspired by the preceding results, we propose an approximation to
the SRPT discipline based on a novel adaptive job grouping mechanism
that uses relative size comparison of a newly arriving job to the
preceding $m$ arrivals. Specifically, if the newly arriving job is
smaller than $k$ and larger than $m-k$ of the previous $m$ jobs, it
is routed into class $k$. Then, the classes of smaller jobs are
served with higher priorities using the static priority scheduling.
The good performance of this mechanism, even for a small number of
classes $m+1$, is demonstrated using the asymptotic queueing
analysis under the heavy-tailed job requirements. We also discuss
refinements of the comparison grouping mechanism that improve the
accuracy of job classification at the expense of a small additional
complexity.

\vspace{0.6cm}

\noindent \textbf{Keywords}: Comparison scheduling, scalability,
fairness, adaptive thresholds, M/G/1 queue, processor sharing,
shortest remaining processing time first, foreground background
processor sharing, asymptotic analysis, heavy tails, medium size
jobs
\end{abstract}

\newpage
\section*{Introduction}
It has been widely recognized that heavy-tailed distributions are
suitable for modeling job sizes in information service networks,
e.g., see \cite{JEM00b,JEMO03} and the references therein. For
heavy-tailed distributions, large jobs appear much more frequently
than for the light-tailed ones, which imposes very different
constraints in terms of optimizing the scheduling process as
compared to the light-tailed scenarios. In particular, schedulers
that may assign the server exclusively to a very large job, e.g.,
first come first serve (FIFO) discipline, can cause very large
delays and, in general, suboptimal performance, as shown by
\cite{Ana99}.

Hence, most of the practical schedulers utilize either the processor
sharing (PS) and foreground background processor sharing (FBPS)
disciplines because of their inherent fairness, or the shortest
remaining processing time first (SRPT) discipline because of its
known optimality under quite general conditions. In particular,  it
was shown by \cite{SCH68} that SRPT minimizes the number of
customers in the G/G/1 queue over all work-conserving disciplines.
For early references on these and other scheduling disciplines see
\cite{Kle76,Wol89} and the references therein. Recently, the
performance of these disciplines was revisited in the context of
heavy tails; for a recent survey see \cite{BBQ03}. For practical
applications of SRPT-based scheduling to improving Web server
performance see \cite{harsba03,rawat03swift}; also, for recent
studies that are applying FBPS to reducing the latency of short TCP
flows see \cite{RaoUVB04,RaiBU05}.

It is well known that the sojourn time distributions under PS, FBPS
and SRPT scheduling disciplines are asymptotically equivalent for
power law distributions (more precisely, regularly or intermediately
regularly varying distributions). This was originally proved by
\cite{NQ00} and then later studied for regularly varying
distributions in Theorems~2.2, 2.5 and 2.6 of \cite{BBQ03}; see also
Theorem~2.1 of \cite{JEMO03} and Theorem~1 of \cite{Predrag02}. In
other words, for large jobs, the waiting time does not depend on the
choice of a specific scheduling discipline among PS, FBPS and SRPT.

In this paper, we introduce a new notion of conditional waiting
time distribution which allows us to refine and distinguish the
performance of PS/FBPS and SRPT schedulers for medium size jobs.
Informally, our first main result, stated in
Theorem~\ref{theorem:main}, shows that even the relatively smaller
jobs receive asymptotically the same residual capacity $1-\rho$ as
the larger ones for SRPT discipline, while, for PS/FBPS
schedulers, these smaller jobs share the residual capacity equally
with the larger jobs in the system. Hence, it appears that SRPT
provides much better and more uniform performance over a wide
range of time scales. Furthermore, the performance improvement for
conditionally smaller jobs is not achieved at the expense of
larger jobs, i.e., SRPT is not only efficient but fair as well,
which is in line with similar recent findings in the context of
mean value analysis by \cite{bansal01,admo03}. To this end, we
would like to point out that contrary to our findings, in the
light-tailed context, it was shown by \cite{RS01} that FIFO is
optimal in terms of maximizing the decay rate of the waiting time
distribution over all work conserving disciplines. For more recent
results on the light-tailed asymptotic analysis see \cite{NZ06}
and the references therein.

Overall, using the SRPT scheduling is beneficial for a broad range
of conditions and applications. However, as discussed in one of
the very first papers on SRPT by \cite{SCMI66}, this discipline
may be quite difficult to implement. Clearly, its complicated
preemptive nature requires keeping track of the remaining
processing times for all jobs in the queue which may be
prohibitive for systems with large job volumes, e.g., Web servers.
In addition, \cite{SCMI66} show that the expected number of
preemptions per job is proportional to the load of the system,
which can be quite large. Hence, even as early as 1966, it was
recognized by \cite{SCMI66} that one should try to approximate
SRPT with less complex schedulers. The most apparent option, as
suggested by \cite{SCMI66}, is to design a threshold-based static
priority approximation to SRPT. Basically, the idea is to select a
fixed number of thresholds $m$ and then group jobs into $m+1$
classes depending on which pair of thresholds a job size happens
to fall between. Then, these classes are served according to the
static priority discipline with higher priorities assigned to
classes with smaller jobs. Since then, there has been a lot of
work on threshold-based scheduling policies.  For example, it was
shown by \cite{NBDG06} that even with a single threshold, one can
obtain the performance comparable to SRPT up to a constant factor
in terms of the mean sojourn time for {M/M/1} queue as well as for
{M/G/1} queue with finite variance Pareto service distribution.

Although it is encouraging that one can achieve a provably very good
approximation of M/G/1/SRPT queue even with a very small number of
static thresholds (only one in the paper by \cite{NBDG06}), these
solutions are likely not to perform well in practice since the
traffic characteristics are often nonstationary, highly correlated
(long range dependent) and very bursty (e.g., batch arrivals, etc);
see \cite{PAW00,SYL99}. In order to overcome these difficulties, we
propose a novel adaptive job classification (grouping) mechanism
that is based on relative size comparison of a newly arriving job to
the previous $m$ arrivals; this scheduler is inspired by our
conditional limit results. Specifically, if an arriving job is
smaller than $k$ and larger than $m-k$ of the previous $m$ jobs, it
is routed into class $k$. We also discuss refinements of the
comparison grouping mechanism that improve the accuracy of the
classification for both light-tailed and correlated job arrivals at
the expense of a small (fixed) additional complexity in
Subsection~\ref{sec:ref}.

The good performance of our comparison classification mechanism is
demonstrated using the asymptotic queueing analysis under
heavy-tailed job sizes in Section~\ref{sec:qana}. First, in
Subsection~\ref{ss:qi} we study the queueing behavior of a class $k$
process in isolation and show that the workload distribution decays
faster for larger $k$. More precisely, for regularly varying (power
law) service distribution, the tail of the workload distribution
$\Pr[W^{(k)}>x]$ of a class $k$ process, as implied by
Theorem~\ref{theorem:isolate}, is of the order of
$x(\Pr[B>x])^{k+1}$, where $B$ is the service requirement of a
typical job before the comparison splitting. Hence, our comparison
splitting procedure provides a proper ordering of jobs. Furthermore,
in Subsection~\ref{ss:sp} we study the joint queueing behavior of
all classes under the static priority (SP) discipline, with higher
priorities assigned to classes with ``smaller'' jobs.
Theorem~\ref{theorem:static} shows that the workload distribution of
a class with a smaller index $k$ (i.e., larger jobs) has the same
queueing behavior as if it were served in isolation with the system
capacity reduced by the mean arrival rates of the classes with
smaller jobs. Roughly speaking, this is a similar behavior as seen
in Theorem~\ref{theorem:main} for the SRPT discipline and, thus, the
SP scheduling with our comparison splitting should provide a
reasonable approximation to the SRPT discipline. Furthermore, in
regard to the analysis, we would like to point out that the main
technical difficulty is that the split processes are individually
and mutually correlated. This statistical correlation makes other
types of analyses, outside of the heavy-tailed context, possibly
difficult.

In addition, we would like to point out that a preliminary version
of this paper has appeared earlier in  \cite{JKTSIG07} as part of
the conference proceedings, which contains sketches of the proofs
as well as the extensive simulation experiments.
 Those experiments demonstrated the
good performance of our adaptive scheduler that, in particular,
outperforms the static threshold policies when the arrival processes
are statistically correlated and time varying. However, in contrast
to the previous focus on simulations in \cite{JKTSIG07}, this paper
provides  the rigorous details of the proofs.

The rest of this paper is structured as follows. In the next
section, we introduce the new notion of conditional waiting time
distribution that refines and differentiates the performance of
PS/FBPS and SRPT schedulers for medium size jobs. Based on the
conditional asymptotic result of the sojourn time distribution
(stated in Theorem~\ref{theorem:main}), we propose a novel
comparison grouping scheme and its refined version in
Section~\ref{sec:comparison}. To demonstrate its good performance,
we conduct the asymptotic queueing analysis under heavy-tailed job
sizes in Section~\ref{sec:qana}. In the end, Section~\ref{sec:con}
summarizes our contributions.

\section{Heavy-Tailed Limits for Medium Size Jobs with Popular Schedulers}
\label{s:limit}
\subsection{Definitions and Preliminary Results} In this section we
introduce the necessary notation and describe the existing and
preliminary results. Let $B_i$ and $V_i$ denote the job size and
the waiting time of the customer arriving at time $T_i$,
respectively, where $\{B_i\}_{i>-\infty}$ are i.i.d. random
variables. The arrival points $\{T_i\}_{i>-\infty}$ are assumed to
be Poisson with rate $\lambda$ and independent of job requirements
$\{B_i\}_{i>-\infty}$. Hence, without loss of generality, in view
of the PASTA property, we set $T_0=0$. The waiting time of a
customer is defined as the amount of time between its arrival and
departure, also referred to as sojourn time in the queuing
literature. To present our main results, we need the following
definitions.
\begin{definition}
\label{def:ir}
 A nonnegative random variable $X$ or its distribution function (d.f.) $F$ is
called intermediately regularly varying, $X \in {\cal IR}$, if
\begin{equation*}
\lim_{\eta \uparrow 1} \varlimsup_{x \rightarrow \infty}
\frac{\Pr[X>\eta x]} {\Pr[X>x]} = 1.
\end{equation*}
\end{definition}
Regularly varying distributions ${\cal R}_\alpha$ are the
best-known examples from ${\cal IR}$.
\begin{definition}
\label{def:r}
 A nonnegative random variable $X$ or its d.f. $F$ is
called regularly varying with index $\alpha$, $X\in {\cal
R}_\alpha$ $(F\in{\cal R}_\alpha)$, if
$$F(x)=1-\frac{l(x)}{x^\alpha},\;\; \alpha\ge 0,$$
where $l(x)$: $\RR_+\rightarrow \RR_+$ is slowly varying, i.e.,
$\lim_{x\rightarrow \infty}l(\eta x)/l(x)=1,\;\eta
>1$.
\end{definition}
The preceding class includes the well-known power law distributions,
e.g., $F(x)=1-1/x^\alpha, x\ge 1, \alpha>0$.

Let $\tilde{B}_i,1\le i \le m$  be the order statistics of
$B_{-i},1\le i\le m$ with the convention $\tilde{B}_0=\infty$ and
$\tilde{B}_{m+1}=0$. To make the notation uniform, we assume that
$\tilde{B}_0=\infty$, $\tilde{B}_{1}=0$ for $m=0$, and when it is
necessary to emphasize the total number of random variables, we
write explicitly $\tilde{B}_i^{(m)}\equiv \tilde{B}_i$.
\begin{definition}
Let $\mathcal{A}_k^{(m)}\triangleq \{\tilde{B}_{k+1}^{(m)}\le B_0<
\tilde{B}_k^{(m)}\}$ for $m\ge k\ge 0$.
\end{definition}

The asymptotic behavior of the sojourn time distribution for PS,
FBPS and SRPT has been extensively studied under heavy-tails,
e.g., see \cite{zwart98,NQ00,BBQ03,Predrag02,JEMO03} and the
references therein. We summarize these results for intermediately
regularly varying distributions in the following theorem, which
follows directly from our more general/refined result presented in
Theorem~\ref{theorem:main} in the following section. In order to
ease the notation we simply write $B\equiv B_0$ and $V\equiv V_0$.

For the rest of the paper, we assume that the system has reached
stationarity. Also, we use $H$ to denote a sufficiently large
positive constant. The value of $H$ is generally different in
different places,  for example, $H/2=H$, $H^{2}=H$, $H+1=H$, etc.
Furthermore, we use the following standard notation. For any two
real functions $a(t)$ and $b(t)$ and fixed $t_{0}\in
\real\cup\{\infty\}$ we will use $a(t)\thicksim b(t)$ as
$t\rightarrow t_{0}$ to denote $\lim_{t\rightarrow t_{0}}
[a(t)/b(t)]=1$. Similarly, we say that $a(t)\gtrsim b(t)$ as
$t\rightarrow t_{0}$ if $\lim\inf_{t\rightarrow t_{0}}a(t)/b(t)\ge
1$; $a(t)\lesssim b(t)$ has a complementary definition. In addition,
we say that $a(t)=o(b(t))$ as $t\rightarrow t_0$ if
$\lim_{t\rightarrow t_0} a(t)/b(t)= 0$. When $t_0=\infty$, we often
simply write $a(t)=o(b(t))$ without explicitly stating $t\rightarrow
\infty$ in order to simplify the notation.

\begin{theorem}\label{theorem:classic}
If $B\in {\cal IR}$ and $\expect B^\alpha<\infty$ for some
$\alpha>1$, then,
  under the PS, FBPS or SRPT discipline, we have, as $x\rightarrow \infty$,
\begin{equation*}
  \Pr \left[V>x \right] \sim \Pr \left[B>(1-\rho)x
  \right].
  \end{equation*}
\end{theorem}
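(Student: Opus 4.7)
The plan is to obtain this theorem as the boundary case $m=k=0$ of the paper's more refined conditional limit result stated in Theorem~\ref{theorem:main}. With the stipulated conventions $\tilde{B}_0^{(0)}=\infty$ and $\tilde{B}_1^{(0)}=0$, the event
\[ \mathcal{A}_0^{(0)} = \{\tilde{B}_1^{(0)} \leq B_0 < \tilde{B}_0^{(0)}\} = \{0 \leq B_0 < \infty\} \]
has probability one, so conditioning on $\mathcal{A}_0^{(0)}$ is vacuous and the conditional asymptotic coincides with the unconditional one. Since Theorem~\ref{theorem:main} yields a conditional residual serving rate of $(1-\rho)/(k+1)$ for PS/FBPS and $(1-\rho)$ for SRPT, specializing to $k=0$ collapses all three disciplines to the same residual rate $(1-\rho)$, giving exactly $\Pr[V>x] \sim \Pr[B>(1-\rho)x]$.

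If I were to prove this classical statement independently (without invoking Theorem~\ref{theorem:main}), I would use the standard heavy-tail sandwiching employed in \cite{zwart98,NQ00,Predrag02,JEMO03}. For the lower bound, fix $\epsilon>0$ small and condition on $\{B_0 > (1-\rho)(1+\epsilon)x\}$; using that during the service of such a large tagged job the competing workload grows at average rate $\rho$ with fluctuations controlled by a functional law of large numbers, one shows $V_0 > x$ with conditional probability tending to one, which combined with the $\mathcal{IR}$ property of $B$ produces $\Pr[V>x] \gtrsim \Pr[B>(1-\rho)x]$. For the upper bound, one partitions $\{V>x\}$ into (i) the ``main'' event that $B_0$ is close to or exceeds $(1-\rho)(1-\epsilon)x$, and (ii) a remainder where either the residual workload at arrival is atypically large or some past arrival is anomalously heavy; the moment assumption $\expect B^\alpha<\infty$ for some $\alpha>1$, together with standard subexponential/$\mathcal{IR}$ estimates, makes these remainders $o(\Pr[B>(1-\rho)x])$. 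The discipline-specific step is then to argue that on (i) the sojourn indeed concentrates around $B_0/(1-\rho)$: for PS via the fluid/processor-sharing heavy-traffic limit, for FBPS by noting that jobs larger than the tagged one contribute nothing to its service share, and for SRPT by the preemption structure that shields a large job from later arrivals of comparable or larger size.

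The main obstacle is not in the specialization above but in the more refined Theorem~\ref{theorem:main} from which this corollary is extracted: the coupling between $B_0$ and the recent past $B_{-1},\ldots,B_{-m}$ induced by conditioning on the ordering event $\mathcal{A}_k^{(m)}$. The proof of Theorem~\ref{theorem:classic} sketched above sidesteps this entirely because $\mathcal{A}_0^{(0)}$ is the full sample space; in contrast, the forthcoming proof of Theorem~\ref{theorem:main} must control how the conditional distributions of $B_0$ and of the steady-state residual workload interact under the order-statistic conditioning, and must quantify how the $k$ ``comparable'' preceding jobs effectively share the processor with the tagged job under PS/FBPS (yielding the factor $1/(k+1)$) while leaving it essentially unaffected under SRPT.
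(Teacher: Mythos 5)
Your reduction is correct and matches the paper's own strategy of obtaining Theorem~\ref{theorem:classic} as an immediate corollary of Theorem~\ref{theorem:main}. The only (cosmetic) difference is the specialization: the paper fixes $m$ and sums the conditional asymptotics over $k=0,\dots,m$, using that the $k\ge 1$ terms are $o(\Pr[B>(1-\rho)x])$ under the $\mathcal{IR}$ and moment hypotheses, whereas you set $m=k=0$ so that $\mathcal{A}_0^{(0)}$ is the full sample space and no summation or domination estimate is needed; both are trivially valid given Theorem~\ref{theorem:main}.
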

The preceding asymptotic insensitivity of the sojourn (waiting) time
distribution on the scheduling discipline was first derived in
Theorems~5.2.3, 5.2.4 and 5.2.5 of \cite{NQ00} under somewhat more
restrictive conditions;
see also Theorems~2.2, 2.5 and 2.6 of \cite{BBQ03}. For PS, this
result was proved in Theorem~2.1 of \cite{JEMO03} using a novel
sample path approach that allows further extension of the result to
moderately heavy distributions, e.g., lognormal, see Theorem~3.1 of
\cite{JEMO03}. Furthermore, as noted in Appendix B of
\cite{Predrag02}, this sample path approach extends directly to SRPT
and FBPS scheduling disciplines. Our proof of
Theorem~\ref{theorem:main} in this paper relies directly on the
arguments developed by \cite{Predrag02,JEMO03}.

\subsection{Conditional Limits} The following theorem represents our first main
result, which implies Theorem \ref{theorem:classic} by
unconditioning on event $\mathcal{A}_k^{(m)}$, i.e., summing over
all $k$, $0\le k\le m$.
\begin{theorem}\label{theorem:main}
  If $B\in \mathcal{IR}$ and $\expect B^\alpha<\infty$ for some $\alpha>1$, then, under either PS or FBPS discipline,
   we have for fixed $k$, as $x\to \infty$,
\begin{equation}\label{eq:ps}
  \Pr \left[V>x, \mathcal{A}_k^{(m)} \right] \sim \Pr
  \left[B>\frac{(1-\rho)x}{(1+k)},
  \mathcal{A}_k^{(m)}\right] \sim  \frac{1}{k+1}{{m}\choose{k}} {  \Pr \left[B>\frac{(1-\rho)x}{k+1}
  \right]^{k+1} },
  \end{equation}
  and under the SRPT discipline,
  \begin{equation}\label{eq:SRPT}
  \Pr \left[V>x, \mathcal{A}_k^{(m)} \right] \sim  \Pr \left[B>
  (1-\rho)x,
 \mathcal{A}_k^{(m)} \right] \sim  \frac{1}{k+1}{{m}\choose{k}}  {  \Pr \left[B>(1-\rho)x
  \right]^{k+1} }.
  \end{equation}
\end{theorem}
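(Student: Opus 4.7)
My plan is to prove the theorem in two stages. First I establish the ``direct'' asymptotic $\Pr[B_0 > cx, \mathcal{A}_k^{(m)}] \sim \binom{m}{k}\Pr[B>cx]^{k+1}/(k+1)$, which is purely a calculation on the i.i.d.\ sequence $\{B_i\}$. Then I prove the ``queueing'' asymptotic $\Pr[V>x,\mathcal{A}_k^{(m)}] \sim \Pr[B_0>cx,\mathcal{A}_k^{(m)}]$, with $c=(1-\rho)/(k+1)$ for PS/FBPS and $c=1-\rho$ for SRPT. The guiding heuristic is that on $\{V>x\}\cap\mathcal{A}_k^{(m)}$ the tagged job $B_0$ must be of order $cx$; by the definition of $\mathcal{A}_k^{(m)}$ the $k$ outranking preceding jobs then have sizes $\ge B_0$, so these $k$ huge competitors are still in the system throughout $B_0$'s service. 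Under PS/FBPS they share the residual capacity $1-\rho$ with $B_0$, each receiving rate $(1-\rho)/(k+1)$, giving $V\sim(k+1)B_0/(1-\rho)$, whereas under SRPT they are preempted as soon as $B_0$ arrives, so $V\sim B_0/(1-\rho)$.

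The direct asymptotic follows by conditioning on $B_0=y$ and using the i.i.d.\ structure of $\{B_{-1},\ldots,B_{-m}\}$: the conditional probability of $\mathcal{A}_k^{(m)}$ is $\binom{m}{k}\Pr[B>y]^k\Pr[B\le y]^{m-k}$, so
\[
\Pr[B_0>cx,\mathcal{A}_k^{(m)}] = \binom{m}{k}\int_{cx}^{\infty}\Pr[B>y]^k\Pr[B\le y]^{m-k}\,dF(y).
\]
Because $\Pr[B\le y]^{m-k}\to 1$ as $y\to\infty$, and the substitution $u=\Pr[B>y]$ collapses $\int_{cx}^\infty\Pr[B>y]^k\,dF(y)$ to $\Pr[B>cx]^{k+1}/(k+1)$, the claim follows.

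For the lower bound on $\Pr[V>x,\mathcal{A}_k^{(m)}]$, I fix a small $\delta>0$ and work on the ``good event'' $\mathcal{G}_\delta$ consisting of: (i) $B_0>(1+\delta)cx$; (ii) the $k$ preceding jobs whose rank forces $\mathcal{A}_k^{(m)}$ all have size exceeding $(1+\delta)cx$ (automatic since they exceed $B_0$); (iii) the arrival epochs $T_{-m},\ldots,T_{-1}$ lie in a bounded window to the left of $0$ and the pre-$B_0$ background workload is bounded (both typical under Poisson arrivals and $\rho<1$); (iv) a law-of-large-numbers event on the background traffic over the interval $[0,Hx]$ for large $H$, ensuring its cumulative work is within $(\rho\pm\delta)$ of the mean. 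On $\mathcal{G}_\delta$ the $k$ outranking competitors never complete during $B_0$'s sojourn, the scheduler behaves as in the heuristic above, and $V>x$. A routine computation using $B\in\mathcal{IR}$ (to replace $\Pr[B>(1+\delta)cx]$ by $\Pr[B>cx]$ as $\delta\downarrow 0$) shows $\Pr[\mathcal{G}_\delta]\sim (1-o(1))\binom{m}{k}\Pr[B>cx]^{k+1}/(k+1)$.

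The upper bound is the hard part and requires lifting the sample-path machinery of \cite{JEMO03,Predrag02} (which underlies the unconditional Theorem~\ref{theorem:classic}) into the conditional setting. One truncates each job at a slowly growing threshold, so the ``small'' workload concentrates around its mean via Chernoff bounds while ``large'' jobs are sparse; the argument then shows $\{V>x\}\cap\mathcal{A}_k^{(m)}$ essentially forces $B_0>(1-\delta)cx$ and the $k$ outranking preceding jobs all to be ``large'', with no other rare coincidences. The principal obstacle I foresee is that $\mathcal{A}_k^{(m)}$ couples the tagged job $B_0$ to a specific subset of the past that simultaneously helps determine the workload at time $0$; the sample-path decomposition must therefore be set up so that those $k$ outranking jobs are absorbed into the ``large'' component, and the resulting probability estimates must be uniform over the ordering pattern picked out by $\mathcal{A}_k^{(m)}$. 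For SRPT the upper bound is significantly easier, because the $k$ outranking jobs are preempted by $B_0$ and contribute nothing to $V$; the standard M/G/1/SRPT asymptotic then applies to $B_0$ against the background of jobs smaller than itself, and the $\mathcal{A}_k^{(m)}$ factor separates cleanly, after conditioning on $B_0=y$, into the order-statistics weight $\binom{m}{k}\Pr[B>y]^k$ already computed in the first step.
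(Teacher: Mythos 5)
Your two-stage plan and the guiding heuristic are essentially correct, and the ``direct asymptotic'' step is exactly the content of Lemma~\ref{lemma:orderstat2} (the paper proves it via the symmetry identity $\Pr[X_0>x,\,X_0<\tilde X_k]=\frac{1}{k+1}\binom{m}{k}\Pr[X_0>x]^{k+1}$ for continuous distributions, then extends to $\mathcal{L}$; your integral substitution $u=\Pr[B>y]$ gives the same conclusion for continuous $B$). However, there are two substantive issues.

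First, your SRPT heuristic contains a subtle error. You assert that the $k$ outranking preceding jobs ``are preempted as soon as $B_0$ arrives, so $V\sim B_0/(1-\rho)$.'' But under SRPT priority is by \emph{remaining} processing time, and the outranking jobs $B_{-i}>B_0$ have already been in the system since $T_{-i}<0$; by time $T_0$ their remaining service $B_{-i}^0$ can be \emph{smaller} than $B_0$, in which case they are not preempted and do block $B_0$. The paper's upper-bound decomposition for SRPT isolates exactly this danger in the term $I_2$, splitting it into $P_1$ (when $B_{-1}<B_0$, easy) and $P_2$ (when $B_0\leq B_{-1}\leq B_0+|T_{-m}|$, the dangerous band), and shows $P_2=o(I_1)$ by observing that $B_{-1}$ must lie in a thin window $[B_0,B_0+\sqrt{y}]$ while also $B_{-1}>y$, which is asymptotically negligible. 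Your sketch never confronts this. Relatedly, the paper's SRPT lower bound does not use the LLN-on-an-interval event you describe; it uses a truncated-arrivals coupling together with the monotonicity Lemma~\ref{lemma:compare} and the stopping-time identity $C(T_{l0})=B_0-l$, plus exponential tails of the busy period of the truncated queue, which is cleaner and handles the conditioning on $\mathcal{A}_k^{(m)}$ without needing to control the entire background workload path.

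Second, for PS/FBPS your upper-bound paragraph identifies the right obstacle (the $k$ outranking jobs are coupled to both the classification event and the workload at $T_0$) but leaves it unproved, and this is where the real work lies. The paper's route is to start from the min-plus identity $V_0=B_0+\sum_{i=1}^{m}B_{-i}^0\wedge B_0+\sum_{i=1}^{L_m}B_i^{(e)}(0)\wedge B_0+\sum_{i=1}^{N(V_0)}B_i\wedge R_0(T_i)$, bound the preceding-$m$ contribution on $\mathcal{A}_k^{(m)}$ by $(k+1)B_0+(m-k)\tilde B_{k+1}^{(m)}$, absorb the remaining two sums into one workload term plus a pre-$T_{-m}$ residual term, and then show the three non-leading pieces are $o(I_1)$ using Lemma~\ref{lemma:orderstat2} and the $o(x^{-\beta})$ estimates from Lemma~3.2 of \cite{JEMO03}. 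Until you execute some version of that decomposition and the accompanying probability estimates, the upper bound is a gap, not a proof.
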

\begin{remark}
These results can be easily extended to $GI/GI/1$ queue under the
FBPS discipline, and possibly under the SRPT as well using the
recent studies on SRPT by \cite{Nuy07}.
In order to provide a unified framework, we omit such possible
extensions here and restrict ourselves to the $M/G/1$ framework.
Furthermore, our focus in the second part of the paper is to exploit
this idea of relative job comparisons to design adaptive and
efficient approximation of SRPT, which we term \emph{comparison
scheduling}.
\end{remark}
\begin{remark}
Note that on $\mathcal{A}_k^{(m)}$, the distribution of $B$ has a
much lighter tail of the order of $\Pr \left[B>x \right]^{k+1}$ and,
thus, $\mathcal{A}_k^{(m)}$ partitions the probability space into
jobs of decreasing sizes as $k$ increases. Interestingly, the result
shows that, for the SRPT discipline, even the relatively much
smaller job receives the entire long-term residual capacity
$1-\rho$, while, for PS/FBPS, this smaller job shares equally the
residual capacity with the $k$ larger ones. Hence, SRPT outperforms
PS/FBPS for medium size jobs and therefore provides much better and
more uniform performance over a wide range of time scales, i.e., it
appears that SRPT generates extra capacity. Informally, we believe
that the explanation for this comes from the combined effect of the
SRPT prioritization mechanism and the fact that jobs of
``different'' sizes occur on different time scales. Hence, the
medium size jobs are basically not affected by the larger ones
because of the higher priority assigned to them and the larger jobs
are not impacted by the smaller ones due to the time scale
separation.
\end{remark}
In order to prove this theorem, we define the class of heavy-tailed
distributions ${\cal L}$ that contains subexponential distributions
and, in particular, the intermediately regularly varying class
${\cal IR}$, and establish the following two preliminary lemmas.
\begin{definition}\label{def:heavy}
A nonnegative random variable $X$ or its d.f. F is called
heavy-tailed $X\in {\cal L}$ (or $F\in {\cal L}$) if, for any
fixed $y\in \RR$,
$$\lim_{x\rightarrow \infty}\frac{\Pr[X>x-y]}{\Pr[X>y]}=1.$$
\end{definition}
\begin{lemma}
\label{lemma:orderstat2} Let $\{X_i \}_{0\leq i \leq m}$ be i.i.d.
random variables with $X_0 \in {\cal L}$ and,  denote the order
statistics of $X_1, X_2, \cdots, X_m$ by $\tilde{X}_1\geq
 \tilde{X}_2\geq  \cdots \geq \tilde{X}_m$ with $\tilde{X}_0=\infty$ and $\tilde{X}_{m+1}=0$, then, for any $m\ge k\ge 0$, as $x\rightarrow
 \infty$,
we have
\begin{align}\label{eq:two}
\Pr[X_0>x, \tilde{X}_{k+1}\le X_0< \tilde{X}_k]&\sim \Pr[X_0>x,
X_0< \tilde{X}_k]\nonumber \\&\sim \frac{1}{k+1}{m\choose
k}(\Pr[X_0>x])^{k+1}.
\end{align}
\end{lemma}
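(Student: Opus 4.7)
My plan is to reduce both joint probabilities to one-dimensional integrals using the exchangeability of $X_1,\ldots,X_m$, and then evaluate the dominant integral in closed form via the substitution $u=\Pr[X>t]$. Throughout I write $G(t)\eqDef\Pr[X>t]$ and $F=1-G$ and treat $F$ as continuous; since $X_0\in\mathcal{L}$ implies $G(x)>0$ for all $x$, any atoms can be handled by a standard smoothing argument with negligible effect on the leading-order asymptotic.

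On $\{\tilde{X}_{k+1}\le X_0<\tilde{X}_k\}$, exactly $k$ of $X_1,\ldots,X_m$ strictly exceed $X_0$. By the i.i.d.\ symmetry of $X_1,\ldots,X_m$ there are $\binom{m}{k}$ equally-likely choices of which indices form the large ones, so conditioning on $X_0=t$ yields
$$\Pr[X_0>x,\,\tilde{X}_{k+1}\le X_0<\tilde{X}_k]\;=\;\binom{m}{k}\int_x^\infty G(t)^k F(t)^{m-k}\,dF(t).$$
The central computation is the elementary identity
$$\int_x^\infty G(t)^k\,dF(t)\;=\;-\int_0^{G(x)} u^k\,du\;=\;\frac{G(x)^{k+1}}{k+1}.$$
For any $\varepsilon>0$ the sandwich $1-\varepsilon\le F(t)^{m-k}\le 1$ holds on $[T,\infty)$ for $T$ large enough (since $F\uparrow 1$), so a standard limit argument upgrades these into
$$\Pr[X_0>x,\,\tilde{X}_{k+1}\le X_0<\tilde{X}_k]\;\sim\;\frac{1}{k+1}\binom{m}{k}G(x)^{k+1},$$
which is the second $\sim$ in the lemma.

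For the first $\sim$, I would decompose $\{X_0<\tilde{X}_k\}$ as the disjoint union over $j=k,k+1,\ldots,m$ of the events ``exactly $j$ of $X_1,\ldots,X_m$ exceed $X_0$''. The $j=k$ term is precisely what was just computed, and the same symmetry-plus-integral recipe bounds each $j>k$ contribution by $\binom{m}{j}G(x)^{j+1}/(j+1)=o(G(x)^{k+1})$, so those higher-order terms are absorbed into the asymptotic. The main subtlety I expect is cleanly handling atoms in $F$: both the ``exactly $k$ strictly exceed'' decomposition and the integration-by-parts identity presuppose continuity, but thanks to $X_0\in\mathcal{L}$ an approximation by a continuous $F_\delta$ and passage to the limit shows the atomic defect is of strictly higher order in $G(x)$ and hence negligible.
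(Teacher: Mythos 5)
Your reduction to the Stieltjes integral $\binom{m}{k}\int_x^\infty G(t)^k F(t)^{m-k}\,dF(t)$, the substitution $u=G(t)$, and the decomposition over ``exactly $j$ of the $X_1,\dots,X_m$ exceed $X_0$'' for the first $\sim$ are all correct under the standing assumption that $F$ is continuous. This is a slightly more computational route than the paper's continuous-case argument, which instead uses exchangeability of $(X_0,X_1,\dots,X_k)$ to read off $\tfrac{1}{k+1}\binom{m}{k}\Pr[X_0>x]^{k+1}$ without writing an integral, but the two are essentially equivalent and both go through.

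The genuine gap is the treatment of atoms, and the justification you give for it is wrong. You argue that because $X_0\in\mathcal{L}$ forces $G(x)>0$ for all $x$, atoms ``can be handled by a standard smoothing argument with negligible effect.'' Positivity of $G$ is not what makes the lemma true. The geometric distribution, which the paper discusses in the remark immediately following the lemma, has $G(x)>0$ for every $x$ and yet the lemma \emph{fails} for it: for $m=k=1$ one gets $\Pr[X_1>X_0>x]=\tfrac{p}{1+p}(\Pr[X_0>x])^2$, not $\tfrac12(\Pr[X_0>x])^2$. In other words, for distributions with atoms the identity $\int_x^\infty G(t)^k\,dF(t)=G(x)^{k+1}/(k+1)$ simply breaks down, and the resulting discrepancy is generically of the \emph{same} order $G(x)^{k+1}$ rather than strictly higher order as you assert. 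What actually saves the asymptotics in the stated generality is the full strength of the long-tail condition $\Pr[X>x-1]/\Pr[X>x]\to 1$, which geometric fails. The paper's proof spends most of its length on precisely this: it proves $\Pr[X_n\ge\cdots\ge X_0>x]\sim\Pr[X_n>\cdots>X_0>x]$ by bounding $\Pr[X_0>x,\,X_0=X_1]$ through a sum over levels $y\ge x$ in which $\Pr[X_0>y]/\Pr[X_0>y+1]\le 1+\epsilon$ is used in an essential way. A smoothing argument might ultimately be made to work, but you would still have to prove, using that ratio condition, that the error it introduces is $o(G(x)^{k+1})$; that is exactly the nontrivial content of the lemma in the discontinuous case, and it is missing here.
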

\begin{remark}
{\rm This result holds for all continuous distributions without
the assumption $X_0\in {\cal L}$. However, the assumption $X_0\in
{\cal L}$ is necessary in general since the result may not hold
for light-tailed lattice valued distributions. Here, easy
calculations show that the lemma does not hold for geometric
distribution $\Pr[X_i=j]=p^j(1-p), j\ge 0$, where we obtain for
$m=k=1$ and positive integer $x\in \nat$
\begin{align*}
\Pr[X_1> X_0> x]&=\expect\left[\ind\{X_0>
x\}p^{X_0+1}\right]=\frac{p}{1+p}(\Pr[X_0> x])^2.
\end{align*}}
\end{remark}
\begin{lemma}\label{lemma:compare}
 If two arrival processes
$\{(T_i,B_{1i}) \}_{i> -\infty}$\; and $\{(T_i,B_{2i}) \}_{i>
-\infty}$, satisfying $B_{1i} =0$ for $i< 0$, $B_{10}=B_{20} $, and
either $B_{1i}=B_{2i}$ or $B_{1i}=0$ for $i > 0$, are served with
SRPT discipline, then, the corresponding sojourn times $V_1$ and
$V_2$ for the customer arriving at $T_0$ satisfy $V_2\ge V_1$.
\end{lemma}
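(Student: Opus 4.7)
The lemma asserts a pathwise monotonicity of SRPT: removing jobs from the arrival stream cannot increase the sojourn time of any remaining customer. The plan is to establish this via a sample-path induction.

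The first step is a reduction to a single-job removal. I would order the ``extra'' jobs of system~2 but absent in system~1, namely the past arrivals indexed by $i<0$ together with the future arrivals at which $B_{1i}=0$, arbitrarily, and then remove them one at a time, producing a sequence of intermediate SRPT systems interpolating between systems~2 and~1. If single-job removal can only weakly decrease the sojourn time of the tagged customer, iteration along the sequence gives $V_{1}\le V_{2}$.

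For the single-job case, write $A$ for a generic SRPT system containing the tagged customer together with an additional job $j^{\star}$, and let $A'=A\setminus\{j^{\star}\}$ run SRPT with every other job unchanged. Denote by $\sigma_{0}^{X}(t)$ the cumulative service received by tagged in system $X\in\{A,A'\}$ by time $t$. The goal is to establish the sample-path invariant
\begin{equation*}
\sigma_{0}^{A'}(t)\;\ge\;\sigma_{0}^{A}(t)\qquad\text{for all }t\ge 0,
\end{equation*}
which immediately yields $V^{A'}\le V^{A}$ because $V^{X}=\inf\{t\ge 0:\sigma_{0}^{X}(t)=B_{10}\}$. At $t=0$ both sides vanish, and at any arrival or departure event $\sigma_{0}$ is continuous and unchanged. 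Between events each system evolves under SRPT, so tagged is served at unit rate in $X$ exactly when it holds the smallest remaining processing time among the jobs active in $X$. The decisive inductive step is then that, under the invariant maintained so far, whenever tagged is being served in $A$ it must also be served in $A'$, forcing $\sigma_{0}^{A'}$ to grow at least as fast as $\sigma_{0}^{A}$.

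The main technical obstacle is exactly this last implication. The subtlety is ``priority flipping'': because the two SRPT dynamics can route the server differently, a common job $k\ne 0,j^{\star}$ may have been served more in one of the two systems, so its remaining time in $A'$ could drop below its remaining time in $A$, threatening to displace tagged in $A'$ precisely when tagged holds priority in $A$. To rule this out, I would carry along a stronger inductive invariant that keeps track of the cumulative service of every common job, and combine it with work conservation (in particular, the fact that the idle time of $A'$ up to any moment dominates that of $A$). Inspecting the first hypothetical violation time and balancing the service ``books'' of tagged, $k$, $j^{\star}$, and the remaining jobs against the work-conservation budget then yields a contradiction, which closes the induction and establishes $V_{2}\ge V_{1}$.
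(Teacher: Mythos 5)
You correctly sense the central obstacle—the possibility that a common job $k$ could have a smaller remaining processing time in the pruned system $A'$ than in $A$, and thereby block the tagged customer in $A'$ at a moment when the tagged customer is being served in $A$—but you do not actually overcome it. You announce that a ``stronger inductive invariant'' tracking every common job's cumulative service, combined with work conservation, would yield a contradiction at the first hypothetical violation epoch, yet you neither specify the invariant nor carry out the bookkeeping, and it is far from clear that per-job service amounts order themselves in the way you would need (common jobs can indeed receive more service in $A'$, which is exactly what triggers your worry). Since this is the crux of the lemma, what you have is a plan, not a proof.

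The paper's argument avoids priority flipping altogether through a structural observation about SRPT that your sketch does not exploit: relative to the tagged customer arriving at $T_0$, every other job's priority status is decided once and never changes. A job $k$ arriving at $T_k>T_0$ has priority over tagged for its entire stay precisely when $B_k<R_{j0}(T_k)$, because neither $R_k$ nor $R_{j0}$ can decrease except when the corresponding job holds the global minimum, which prevents the two trajectories from crossing; jobs already present at $T_0$ are handled by the same reasoning at $T_0$. This dichotomy collapses all per-job bookkeeping into two scalars per system: the tagged remaining time $R_{j0}(t)$ and the aggregated workload $W_j(t)$ of jobs with priority over tagged. Between arrivals these evolve by a simple Lindley-type mechanism ($W_j$ drains at unit rate to zero, after which $R_{j0}$ drains at unit rate), and at an arrival $T_{n+1}$ there are only three cases, according to whether $B_{n+1}$ falls below $R_{10}$, between $R_{10}$ and $R_{20}$, or above $R_{20}$; in each case the pair of inequalities $R_{10}\le R_{20}$ and $W_1\le W_2$ is preserved, and the base case is immediate since system~1 has no customers before $T_0$. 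I would recommend replacing the single-job-removal interpolation and the per-job service-tracking invariant by this aggregated invariant on $(R_{j0},W_j)$; it eliminates the difficulty you identified rather than trying to argue around it.
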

 The {\bf proofs} of Lemma
\ref{lemma:orderstat2} and \ref{lemma:compare} are presented in
the Appendix.

\noindent{\bf Proof of Theorem \ref{theorem:main}:} Label the
customer that arrives at time $T_0$, and define function
$R_0(t)\equiv R_{B_0}(t)$ for $t\ge 0$ to be the amount of remaining
work of the labeled customer at time $t$. Let $L_m$ be the number of
customers in the system just before time $T_{-m}$. For all the
customers arriving between $T_{-m}$ and $T_{0}$, define $B_{-i}^{0}$
to be the remaining service time of $B_{-i}, 1\le i\le m$ at time
$t=0$. For all the customers arriving before time $T_{-m}$, define
$B_{i}^{(e)}(m), 1 \le i \le L_{m}$ to be the remaining service time
at time $T_{-m}$ and $B_i^{(e)}(0)$ the remaining service time at
time $t=0$. Denote $x\wedge y\equiv \min(x,y)$.

\emph{1. Processor sharing discipline.} Similarly as in
\cite{JEMO03},
 we have the following min-plus identity
\begin{align}
\label{eq:identps}
 V_0=B_0+\sum_{i=1}^{m}B_{-i}^0\wedge
B_0+\sum_{i=1}^{L_{m}}B_i^{(e)}(0)\wedge
B_0+\sum_{i=1}^{N(V_0)}B_i\wedge R_0(T_i).
\end{align}

First, we establish an \textit{upper bound} for (\ref{eq:ps}).
Observing that the residual service $B_{-i}^0$ for customer $-i$ at
time 0 is upper bounded by its original job size and using
$B_i^{(e)}(0)\le B_i^{(e)}(m)$ as well as $R_0(T_i)\le B_0$, we
derive on set $\mathcal{A}_k^{(m)}$
\begin{align*}
V_0& \le B_0+\sum_{i=1}^{m}B_{-i}\wedge
B_0+\sum_{i=1}^{L_{m}}B_i^{(e)}(m)\wedge
B_0+\sum_{i=1}^{N(V_0)}B_i\wedge B_0\\  &\le
(k+1)B_0+(m-k)\tilde{B}^{(m)}_{k+1}+\sum_{i=1}^{L_{m}}B_i^{(e)}(m)\wedge
B_0+\sum_{i=1}^{N(V_0)}B_i\wedge B_0,
\end{align*}
where $\tilde{B}^{(m)}_{m+1}\equiv 0$ for $m=k$. Then, for
$0<\delta<1-\rho$, we have
\begin{align}\label{eq:psunion}
\Pr \left[V_0(1-\rho-\delta)>x,\mathcal{A}_k^{(m)} \right]&\le
\Pr\left[(k+1)B_0>(1-\delta)x,\mathcal{A}_k^{(m)} \right]
+\Pr\left[(m-k)\tilde{B}^{(m)}_{k+1}>\frac{\delta x}{3},
\mathcal{A}_k^{(m)} \right] \nonumber\\
&\quad+\Pr \left[W_{B\wedge
B_0}^{\rho+\delta }>\frac{\delta x}{3}, \mathcal{A}_k^{(m)}
\right]+\Pr \left[\sum_{i=1}^{L_{m}}B_i^{(e)}(m)\wedge
B_0>\frac{\delta x}{3},\mathcal{A}_k^{(m)} \right]\nonumber \\
&\eqdef I_1(x)+I_2(x)+I_3(x)+I_4(x),
\end{align}
where $W_{B\wedge B_0}^{\rho+\delta }$ denotes the stationary
workload in a queue with job sizes  $\{B_i\wedge B_0\}_{i\ge 1}$ and
service capacity $\rho+\delta$. Now, Lemma \ref{lemma:orderstat2}
implies
 \begin{align}\label{eq:psI1}
I_1(x) =\Pr\left[(k+1)B_0>(1-\delta)x,\mathcal{A}_k^{(m)}
\right]\sim \frac{1}{k+1}{m \choose
k}\left(\Pr\left[B_0>\frac{(1-\delta)x}{k+1} \right]\right )^{k+1}.
\end{align}
Then, denote the order statistics of $\left\{ B_{-i} \right\}_{0\leq
i\leq m}$ by $\{\tilde{B}^{(m+1)}_{i}\}_{0\leq i \leq m}$. For
$k=m$, we have $I_2(x) =0$. And, for $0\leq k \leq m-1$, we obtain,
from Lemma \ref{lemma:orderstat2} and $B_0 \in \mathcal {IR}$,
 \begin{align}\label{eq:I1}
I_2(x)&=\Pr\left[(m-k)\tilde{B}^{(m)}_{k+1}>\frac{\delta x}{3},
\mathcal{A}_k^{(m)} \right]\leq \Pr\left[\tilde{B}^{(m+1)}_{k+2}>\frac{\delta x}{3(m-k)}\right]\nonumber \\
&\sim {m+1 \choose k+2}\left(\Pr\left[B_0>\frac{\delta
x}{3(m-k)}\right]\right)^{k+2}=o(I_1(x)).
\end{align}

Following the same technique that was developed by \cite{JEMO03},
we have
\begin{align*}
I_3(x)&=\Pr \left[W_{B\wedge B_0}^{\rho+\delta }>\frac{\delta
x}{3}, \mathcal{A}_k^{(m)} \right]\nonumber \\&\le \Pr\left
[B_0>\delta ^2  x, \mathcal{A}_k^{(m)}
\right]\Pr\left[W_B^{\rho+\delta }>\frac{\delta
x}{3}\right]+\Pr\left[ W_{B\wedge \delta ^2 x}^{\rho+\delta
}>\frac{\delta x}{3}\right],
\end{align*}
which,  by Lemma 3.2 (i) in \cite{JEMO03}, implies that for $\delta$
small enough,
\begin{equation}\label{eq:psupI3}
I_3(x)=o\left(\Pr \left[B>\frac{x}{k+1} \right]^{k+1}
\right)=o(I_1(x)).
\end{equation}

Again, similarly as in \cite{JEMO03}, for any integer $n_0$, we
have
\begin{align}\label{eq:psupI44}
I_4(x)&=\Pr\left[\sum_{i=1}^{L_{m}}B_i^{(e)}(m)\wedge
B_0>\frac{\delta x}{3},\mathcal{A}_k^{(m)} \right]\nonumber \\
&=\sum_{n=1}^{\infty}(1-\rho)\rho^n \Pr\left[
\sum_{i=1}^{n}B_i^{(e)}(m) \wedge B_0 >\frac{\delta x}{3},
\mathcal{A}_k^{(m)}
\right]\nonumber \\
&\leq n_0 \Pr\left[ \sum_{i=1}^{n_0}B_i^{(e)}(m) \wedge B_0
>\frac{\delta x}{3}, \mathcal{A}_k^{(m)}
\right]  +  \sum_{n=n_0}^{\infty}(1-\rho)\rho^n \Pr\left[ B_0
> \frac{\delta x}{3n}, \mathcal{A}_k^{(m)}
\right]\nonumber \\
&\eqdef I_{41}+I_{42}.
\end{align}
Here, it is easy to see that
\begin{align}\label{eq:p1}
I_{41} &\leq n_0^2\Pr \left[B_1^{(e)}(m) > \frac{\delta x}{3n_0},
\mathcal{A}_k^{(m)}\right] \Pr \left[B_0>\frac{\delta x}{3n_0},
\mathcal{A}_k^{(m)} \right]\nonumber \\
&=o\left( \Pr \left[B_0> x, \mathcal{A}_k^{(m)} \right] \right).
\end{align}
Furthermore, since
\begin{align}
s \eqdef \sup_{x\in [0,\infty)} \frac{\Pr\left[B_0> x,
\mathcal{A}_k^{(m)}\right]}{\Pr\left[B_0>2x, \mathcal{A}_k^{(m)}
\right]}<\infty, \nonumber
\end{align}
we obtain that, for any $\epsilon>0, n\ge 1$, there exists
$K_\epsilon>0$ such that
\begin{align}
\Pr\left[ B_0 > \frac{\delta x}{3n}, \mathcal{A}_k^{(m)} \right]
&\leq s^{\lceil \log_2(n) \rceil} \Pr\left[B_0 > \frac{\delta x}{3},
\mathcal{A}_k^{(m)}\right]
 \leq K_\epsilon(1+\epsilon)^n \Pr\left[B_0 > \frac{\delta x}{3}, \mathcal{A}_k^{(m)}\right],
 \nonumber
\end{align}
which, by choosing $\epsilon$ small enough with $\eta \eqdef \rho
(1+\epsilon)<1$, yields
\begin{align}\label{eq:p2}
\sum_{n=n_0}^{\infty}(1-\rho)\rho^n \Pr\left[ B_0 > \frac{\delta
x}{3n}, \mathcal{A}_k^{(m)} \right] &\leq
\sum_{n=n_0}^{\infty}(1-\rho)\rho^n K_{\epsilon}(1+\epsilon)^n
\Pr\left[B_0
> \frac{\delta x}{3},
\mathcal{A}_k^{(m)}\right] \nonumber \\
&\leq \frac{(1-\rho)K_{\epsilon} \eta^{n_0}}{1-\eta} \Pr\left[B_0 >
\frac{\delta x}{3}, \mathcal{A}_k^{(m)}\right].
\end{align}
By combining (\ref{eq:psupI44}), (\ref{eq:p1}) and (\ref{eq:p2}),
and then passing $n_0\to \infty$, we obtain
 \begin{equation}
I_4(x)=o\left(\Pr\left[B>\frac{x}{k+1}\right]^{k+1}\right)=o(I_1(x)),
\nonumber
\end{equation}
which, in conjunction with (\ref{eq:psunion}), (\ref{eq:psI1}),
(\ref{eq:I1}),  (\ref{eq:psupI3}), and by passing $\delta \to 0$,
yields
\begin{equation}\label{eq:asympU}
\Pr \left[V_0>x,\mathcal{A}_k^{(m)} \right]\lesssim \Pr
\left[B_0>\frac{(1-\rho)x}{k+1},\mathcal{A}_k^{(m)} \right].
\end{equation}

 Next, we prove a \textit{lower bound} for
(\ref{eq:ps}). Observe that within $\mathcal{A}_k^{(m)}$, we have
\begin{align}\label{eq:lowerV}
V_0 \geq B_0+\sum_{i=1}^{m}B_{-i}^0\wedge B_0
+\sum_{i=1}^{N(V_0)}B_i\wedge R_0(T_i)\ge
(k+1)B_0+mT_{-m}+\sum_{i=1}^{N(V_0)}B_i\wedge R_0(T_i),
\end{align}
where in the last inequality we applied $(x-y)\wedge z\ge x\wedge
z-y$ for any $x,y,z\geq 0$; recall that $T_{-m}<0$. Then, using the
same arguments as in equation (3.11) in the proof of Theorem~2.1 in
\cite{JEMO03},  and the properties of $\mathcal{A}_k^{(m)}$, for
$B_0\in {\cal IR}$, we have
\begin{equation}\label{eq:asympL}
\Pr \left[V_0(1-\rho)>x,\mathcal{A}_k^{(m)} \right] \gtrsim  \Pr
\left[B_0>\frac{x}{k+1},\mathcal{A}_k^{(m)} \right].
\end{equation}
Combining (\ref{eq:asympU}) and (\ref{eq:asympL}) completes the
proof of (\ref{eq:ps}) for PS.

 \emph{2. FBPS discipline.} The proof is based on the sojourn time identity for FBPS
\begin{align*}
V_0&=B_0+W_{B\wedge B_0}(T_0)+\sum_{i=1}^{N(V_0)}B_i\wedge B_0,
\end{align*}
where $W_{B\wedge B_0}(T_n)$ denotes the stationary workload at
$T_n$ in a queue with Poisson arrival job sizes equal to
$\{B_i\wedge B_0\}_{-\infty<i<n}$ and capacity $1$; recall that
$T_0=0$.

First, we establish an \textit{upper bound}. Observe that within the
set $\mathcal{A}_k^{(m)}$,
\begin{align}
V_0\le (k+1)B_0+(m-k)\tilde {B}_{k+1}^{(m)}+W_{B\wedge
B_0}(T_{-m})+\sum_{i=1}^{N(V_0)}B_i\wedge B_0,\nonumber
\end{align}
which,  for $0<\delta<1-\rho$, implies
\begin{align}\label{eq:fbpsunion}
\Pr \left[V_0(1-\rho-\delta)>x,\mathcal{A}_k^{(m)} \right]& \le
\Pr\left[(k+1)B_0>(1-\delta)x,\mathcal{A}_k^{(m)} \right]
  +\Pr\left[(m-k)\tilde{B}_{k+1}^{(m)}>\frac{\delta x}{3}, \mathcal{A}_k^{(m)} \right]\nonumber\\
&\;\;\; +\Pr \left[W_{B\wedge B_0}(T_{-m})>\frac{\delta
x}{3},\mathcal{A}_k^{(m)}\right] +\Pr \left[W_{B\wedge
B_0}^{\rho+\delta
}>\frac{\delta x}{3},\mathcal{A}_k^{(m)} \right]\nonumber \\
&\eqdef I_1(x)+I_2(x)+I_3(x)+I_4(x).
\end{align}
Using the same arguments as in the proof of the upper bound for
the PS case,  we obtain
\begin{equation}\label{eq:fbpsI1}
I_1(x)\sim \frac{1}{k+1}{m \choose k
}\left(\Pr\left[B_0>\frac{(1-\delta)x}{k+1} \right]\right )^{k+1},
\end{equation}
and similarly as in (\ref{eq:I1}), (\ref{eq:psupI3}), it follows
that $I_2(x)=o(I_1(x)), I_3(x)=o(I_1(x)), I_4(x)=o(I_1(x))$.
Therefore, by (\ref{eq:fbpsunion}) and (\ref{eq:fbpsI1}), we have
\begin{equation}\label{eq:fbpsupper}
\Pr \left[V_0>x,\mathcal{A}_k^{(m)} \right]\lesssim \Pr
\left[B_0>\frac{(1-\rho)x}{k+1},\mathcal{A}_k^{(m)} \right].
\end{equation}

For a \textit{lower bound}, within $\mathcal {A}_k^{(m)}$, we
obtain
\begin{align*}
V_0\ge (k+1)B_0+T_{-m}+\sum_{i=1}^{N(V_0)}B_i\wedge B_0,
\end{align*}
which is further lower bounded by the righthand side of
(\ref{eq:lowerV}). Combining (\ref{eq:asympL}) and
(\ref{eq:fbpsupper}) completes the proof of (\ref{eq:ps}) for FBPS.

 \emph{3. SRPT discipline.}  A similar sojourn time
identity as in (\ref{eq:identps}) can be derived for SRPT,
\begin{align*}
\label{eq:identsrpt} V_0=B_0+\sum_{i=1}^{L_{m}}B_i^{(e)}(0)\ind
\{B_i^{(e)}(0)\le B_0\} +\sum_{i=1}^m B_{-i}^0\ind\{B_{-i}^0\le
B_0\}+\sum_{i=1}^{N(V_0)}B_i\ind \{B_i< R_0(T_i)\},
\end{align*}
where we use the  convention that the jobs with earlier arrivals are
served first in the case of equal remaining service times.

 First, we prove a \textit{lower bound}. For $l>0$,
define $B_{li} = 0$ for $i< 0$, and  $B_{l0} = B_0$, $B_{li} = B_i
\ind(B_i \leq l)$ for $i
> 0$. For the new queueing system with the arrival process $\{(T_i,B_{li})
\}$, denote by $\{W_{l}(t)\}_{t\geq 0}$  the workload in the system
without the labeled customer. Now, define the stopping time $T_{l0}
\eqdef \inf \{ t: R_0(t) \leq l \}$ and the corresponding residual
capacity without the labeled customer
$C(t)=\int_0^t\ind(W_l(s)=0)ds.$ Clearly,
\begin{equation}\label{eq:srptC}
  \expect[C(t)]\sim \left(1- \rho_l \right) t \;\; \text{ as $t\rightarrow
\infty$},
\end{equation}
where $\rho_l=\lambda \expect[B\ind(B\le l)]=\lim_{t\rightarrow
\infty} \Pr[W_l(t)>0]$.
 When $B_0>l$, all the arrivals after time
$T_0=0$ have shorter job requirements than the remaining service
time of the labeled customer before time $T_{l0}$, and thus, the
labeled customer can only receive service when there are no other
customers present in the queue except itself. Therefore, conditional
on $\{B_0>l\}$, we have
\begin{equation}
\label{eq:t0}
 C(T_{l0})=B_0-l.
\end{equation}
Next, by the standard queueing stability result and
(\ref{eq:srptC}), we have, for $\epsilon>0$,
$$Z\eqdef\sup_{t\ge 0}\left
(C(t)-(1-\rho_l+\epsilon)t\right)<\infty.$$
 From (\ref{eq:t0}),
$V_{l0} \geq T_{l0}$ and the monotonicity of $C(t)$, we obtain,
conditional on $\{B_0>l\}$,
\begin{align*}
Z\geq C(V_{l0})-(1-\rho_l+\epsilon)V_{l0}\geq
B_0-l-(1-\rho_l+\epsilon)V_{l0},
\end{align*}
which, for large $x$, implies
\begin{align}\label{eq:srptLower2}
\Pr\left[V_{l0}>x,\mathcal{A}_k^{(m)}\right]&\geq \Pr\left[B_0>l,
B_0-l-Z
> (1-\rho_{l}+\epsilon)x,\mathcal{A}_k^{(m)}\right]\nonumber\\
&\geq
\Pr\left[B_0-l>(1+\epsilon)(1-\rho_l+\epsilon)x,\mathcal{A}_k^{(m)}\right]
-\Pr\left[Z>\epsilon (1-\rho_l+\epsilon)x\right].
\end{align}
 Furthermore,
since the service requirements $\{B_{li}\}_{i\geq 1}$ are bounded by
$l$, the busy period distribution of the corresponding workload
$W_l(t)$ is exponentially bounded (e.g., see \cite{NZ06,Pal06}),
implying that there exists $\delta>0$, such that
$\Pr[Z>x]=O(e^{-\delta x})$. This bound and (\ref{eq:srptLower2}),
combined with Lemma \ref{lemma:compare} and $B\in {\cal IR}$, yield
\begin{align*}
 \lim_{x\rightarrow \infty} \frac{\Pr\left[ V_0 > x,\mathcal{A}_k^{(m)}\right]}
    {\Pr\left[B_0>(1-\rho)x,\mathcal{A}_k^{(m)}\right]}
   \geq \lim_{x\rightarrow
\infty}\frac{\Pr\left[B_0>(1+\epsilon)(1-\rho_l+
\epsilon)x,\mathcal{A}_k^{(m)}\right]}{\Pr\left[B_0>(1-\rho)x,\mathcal{A}_k^{(m)}\right]}.
\end{align*}
Passing $l\to \infty$, $\epsilon \to 0$ in the preceding inequality,
we obtain the lower bound for SRPT.

For an \emph{upper bound}, since the number of customers in system
for SRPT at any time is not larger than the number of customers in
system for any other rule applied on the same sequence of arrivals
and service requirements, as shown by \cite{SCH68}, we use the
stationary number of customers $L_m^{(PS)}$ at time $T_{-m}$ in the
corresponding PS queue to upper bound $L_m$. Furthermore, the
workload $W$ observed at time $T_{-m}$ is an upper bound for the
residual work $R_i$ of a customer at time $T_{-m}$. Therefore,
\begin{align*}
 V_0\le B_0 +\sum_{i=1}^{L_m^{(PS)}}W\wedge B_0+\sum_{i=1}^m
B_{-i}\ind \{B_{-i}\leq
B_{0}-T_{-m}\}+\sum_{i=1}^{N(V_0)}B_i\wedge B_0,
\end{align*}
which, for any $0<\delta<1-\rho$,  yields
\begin{align}
\label{eq:srptunion}
 \Pr\left[V_0(1-\rho-\delta)>x,\mathcal{A}_k^{(m)}
\right]& \leq \Pr\left[B_0>(1-\delta)x,\mathcal{A}_k^{(m)}
\right]\nonumber\\
&\;\;\;\; +m\Pr\left[B_{-1}\ind \{B_{-1}\leq
B_{0}-T_{-m}\}>\frac{\delta
x}{3m},\mathcal{A}_k^{(m)}\right] \nonumber \\
 &\;\;\;\; +
\Pr\left[ \sum_{i=1}^{L_m^{(PS)}}W\wedge B_0
>\frac{\delta x}{3m},
            \mathcal{A}_k^{(m)} \right]+\Pr \left[W_{B\wedge
B_0}^{\rho+\delta
}>\frac{\delta x}{3}, \mathcal{A}_k^{(m)} \right]\nonumber \\
 & \eqdef I_1(x)+I_2(x)+I_3(x)+I_4(x).
\end{align}

Similarly as in the proof of the upper bound for PS, we have
\begin{equation}\label{eq:srptUpperI1}
I_1(x)\sim  \frac{1}{k+1}{m\choose
k}\left(\Pr\left[B_0>(1-\delta)x\right ]\right )^{k+1}
\end{equation}
and
\begin{equation}\label{eq:srptUpper2}
I_3(x)=o(I_1(x)),\;\; I_4(x)=o(I_1(x)).
\end{equation}

 The only
difference, as compared to the PS case, is to evaluate $I_2(x)$.
Noting that $\mathcal{A}_k^{(m)}$ is a subset of
$$\{ B_0 \leq \tilde{B}_k \}=\bigcup_{1\leq
i_1<\cdots<i_k\leq m}\{ B_{-i_1} \geq B_0, \cdots, B_{-i_k}\geq
B_0 \},$$ we obtain
\begin{align}\label{eq:srptUpper1}
\frac{I_2(x)}{m}&\leq \Pr\left[B_{-1}>\frac{\delta x}{3m},
B_{0}\leq \tilde
  {B}_{k}, B_{-1}<
B_0\right]+\Pr\left[ B_{0}+ \mid T_{-m} \mid \geq B_{-1}
>\frac{\delta x}{3m}, B_{0}\leq \tilde
  {B}_{k}, B_{-1}\geq B_0 \right]\nonumber\\
& \eqdef P_{1}+P_{2},
\end{align}
where $P_1$ is derived by upper bounding the indicator function in
$I_2(x)$ by 1. To estimate $P_1$,  we use
\begin{align}\label{eq:i1}
 P_{1}& \leq {m-1 \choose k} \Pr\left[ B_{-1}> \frac{\delta x}{3m}, B_0 > B_{-1},
            \bigcap_{2\leq i \leq k+1} \{ B_{-i} \geq B_0 \}
  \right]\nonumber\\
  &\leq {m-1 \choose k} \left( \Pr\left[ B_{-1} > \frac{\delta x}{3m}\right]
  \right)^{k+2}=o\left( I_1(x)\right).
\end{align}
Next, for $  y \eqdef \delta x/(3m)$,  it is easy to see
\begin{align}
P_{2}\leq &{m-1 \choose k-1}\Pr \left[ B_0+ \mid T_{-m} \mid \geq
B_{-1}
> y,
         \bigcap_{1\leq i \leq k} \{ B_{-i} \geq B_0 \} \right], \nonumber
\end{align}
where the preceding probability is further bounded by
\begin{align}\label{eq:sqptUpper1}
&\Pr\left[ B_{-1}>y, B_0 \leq B_{-1} \leq B_0+\sqrt{y}\right]
   \Pr\left[ B_0 \geq y- \sqrt{y}\right]^{k-1}+ \Pr[\mid T_{-m} \mid >\sqrt{y}]\nonumber\\
 &\leq \left(  \Pr[ B_{-1}\geq y, B_{-1}\leq B_0+\sqrt{y} ]-\Pr[B_{-1}\geq y, B_{-1}<B_0 ]
\right)\nonumber \\
&\quad  \times \Pr\left[ B_0 \geq y- \sqrt{y}\right]^{k-1}
+me^{-\lambda \sqrt{y}/m}.
\end{align}
Since $B_{0}, B_{-1} \in \mathcal {IR} $ and $\Pr[ B_{-1}\geq y,
B_{-1}\leq B_0+\sqrt{y} ]\lesssim  \Pr\left[ B_0 \geq y
\right]^{2}\sim \Pr[B_{-1}\geq y, B_{-1}<B_0 ]$, the right-hand
side of inequality (\ref{eq:sqptUpper1}) is asymptotically equal
to
$$o\left( \Pr\left[ B_0 \geq y \right]^{k+1}
\right)=o(I_1(x)),$$ which, in conjunction with
 (\ref{eq:i1}) and (\ref{eq:srptUpper1}),
implies $I_2(x)=o(I_1(x))$. Finally, by replacing
(\ref{eq:srptUpperI1}), (\ref{eq:srptUpper2}) and the preceding
estimation of $I_2(x)$
 in
(\ref{eq:srptunion}), and then passing $\delta \to 0$, we finish the
proof.
 \qed

\section{Adaptive and Scalable Comparison Scheduling}
Motivated by our conditional limits presented in Section
\ref{s:limit}, we propose a novel adaptive and scalable comparison
scheduling scheme.
\subsection{Comparison Splitting} \label{sec:comparison}
In this section, we describe a new adaptive job classification
mechanism that we term {\it comparison splitting}. The
classification is based on relative size comparison of the arriving
job to the previous $m$ arrivals, $m\ge 1$. Specifically, if an
arriving job is smaller than $k$ and larger than $m-k$ of the
previous $m$ jobs, it is routed into class $k$, $0\leq k \leq m$.

More formally,  upon the arrival of job $i\ge 0$, we define
$\tilde{B}_{i1}\ge \tilde{B}_{i2}\ge\cdots\tilde{B}_{im}$ to be the
order statistics of $\{B_{i-m}, B_{i-m+1}$, $\cdots$, $ B_{i-1}\}$
with $\tilde{B}_{i0}=\infty$ and $\tilde{B}_{i(m+1)}=0$. Then, if
$\tilde{B}_{i(k+1)}\le B_i< \tilde{B}_{ik}$, the new arrival $B_i$
is routed to class $k, 0\le k\le m$ and the $i$th arrival in class
$k$ is denoted as $B_i^{(k)}$.
In order to initiate the comparison
splitting process, assume that $B_i, -m\leq i\leq -1$ are already
known; otherwise, one can simply set ${B}_{i}\equiv 0, -m\leq i\leq
-1$.

Here, we exemplify our splitting mechanism for $m=3$ by dividing
jobs into four classes S (small), M (medium), L (large) and XL
(extra large) with the following rule,
\begin{displaymath}
B_i\in\left\{\begin{array}{lll} S & \textrm{if $B_i< \tilde{B}_{i3}$},\\
M &\textrm{if $\tilde{B}_{i3}\le B_i< \tilde{B}_{i2}$},\\
L &\textrm{if $\tilde{B}_{i2}\le B_i< \tilde{B}_{i1}$},\\
XL &\textrm{if $\tilde{B}_{i1}\le B_i;$}\\
\end{array}\right.
\end{displaymath}
this example is depicted in Figure~\ref{fig:splitter} (A).

Now, we argue that our comparison splitting actually does order jobs
into classes that contain smaller jobs for larger class indexes.
 Indeed,  when $B\in {\cal L} $, Lemma~\ref{lemma:orderstat2} yields
\begin{equation}
\label{eq:lo} \Pr\left[B_1^{(k)}>x\right]\sim\frac{1}{k+1}{m\choose
k}\Pr[B>x]^{k+1} \;\; \text{ as $x\rightarrow \infty$},
\end{equation}
which implies a decreasing distribution tail when $k$ increases.
Since the preceding expression is only an asymptotic result, it
does not provide information on the possible ordering of the
distributions $\Pr\left[B_1^{(k)}>x \right]$ for finite $x$. We
address this question in the following example.

\begin{example}
  In this example we simulate the performance of the comparison
 splitter for $m=3$ ($4$ classes).
  Assume that the job sizes are distributed as
power law $F(x)=1-1/x^\alpha$ with $\alpha=1.44$,  which is the
empirically measured file distribution by \cite{JEM00b}; see Figure
1 on p.$~577$ therein. For a sample of $10^7$ trials, we plot the
simulated distributions of jobs for each class in
Figure~\ref{fig:splitter} (B). From the figure, it can be observed
that the distributions $\Pr\left[B_1^{(k)}>x \right]$ are properly
ordered for all values of $x$ and $k$, not only for the asymptotic
ones.

\begin{figure}\label{fig:splitter}
\begin{tabular}{m{7cm} m{7.5cm}}
  \centering \epsfig{file=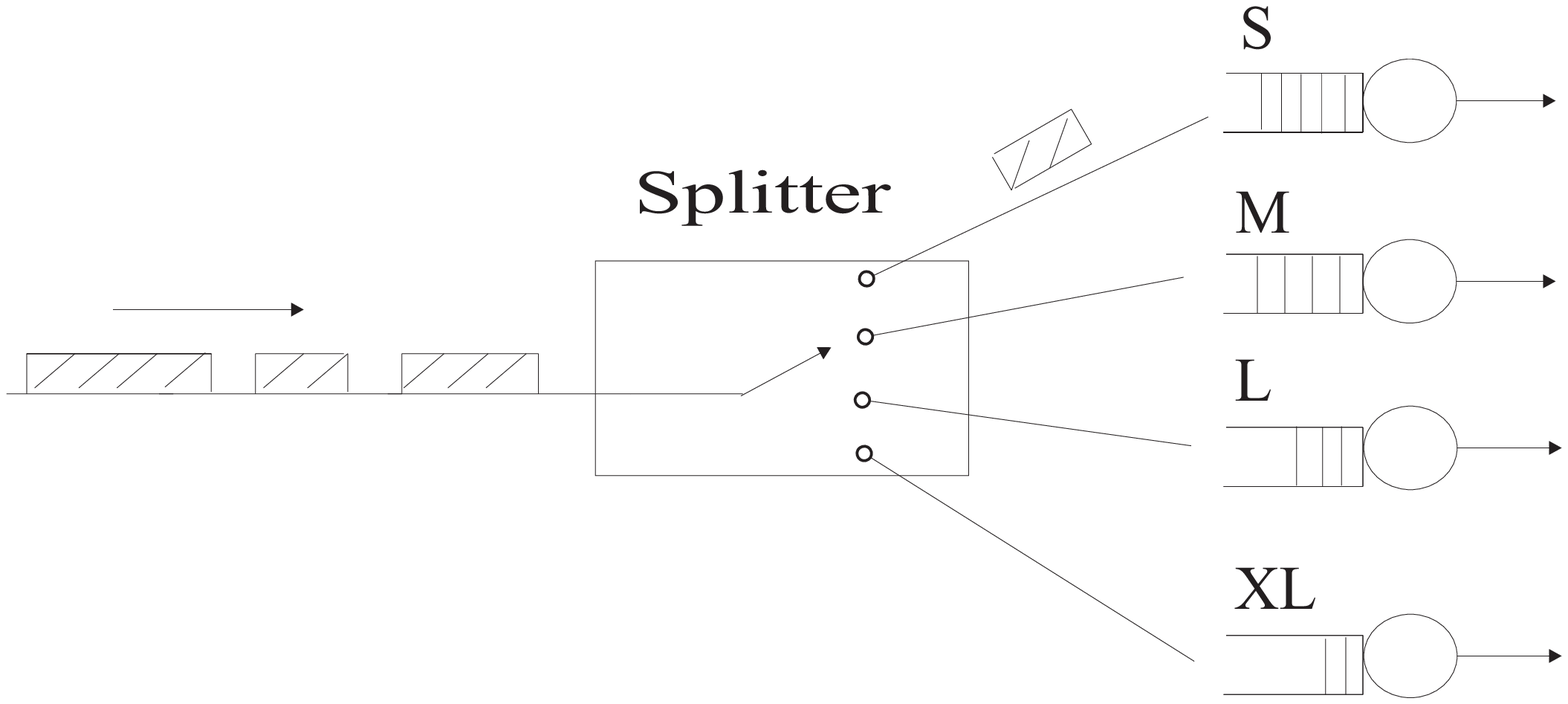,width=7cm,height = 3.5cm}
& \epsfig{file=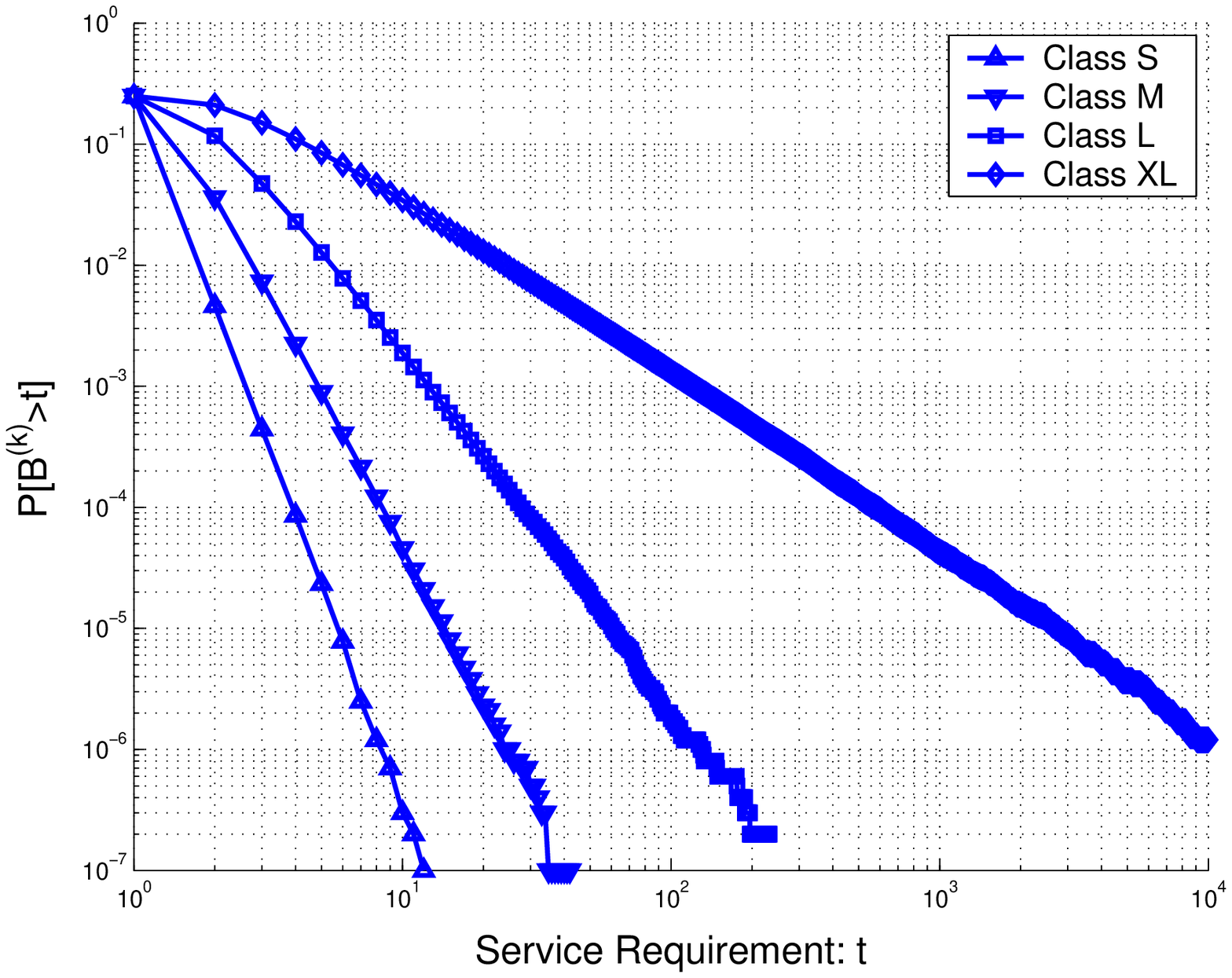,width=7.9cm,height = 5.5cm}\\
\centering (A) Comparison splitter.& \centering (B) Job size
distributions.
\end{tabular}
\caption{A comparison splitter with $m=3$ and job size
distributions for four different classes.}
\end{figure}
\end{example}

Based on the previous analysis and simulation example, we can see
that our comparison splitter has the following advantages:
\begin{itemize}
\item it is adaptive since the comparing thresholds are defined by
the preceding $m$ arrivals; \item it is scalable  because the system
only needs to know the sizes of the previous $m$ jobs; 
\item it provides accurate job classification as shown by equation
(\ref{eq:lo}) and Figure~\ref{fig:splitter} (B).
\end{itemize}

Although our comparison splitter is very likely to provide a
satisfactory ordering of distributions
$\Pr\left[B_1^{(k)}>x\right]$, it may make errors on a sample path
basis. Namely, it can occasionally classify smaller jobs into
 classes of smaller indexes and vice versa, and thus, possibly give a less
accurate classification than a splitting mechanism that uses fixed
thresholds. However, this possible small loss of accuracy is a
fundamental tradeoff to gain the adaptability that is highly
desirable in practice.

\subsubsection{Refined Splitting} \label{sec:ref} From the
description of the comparison splitter, we can see that its adaptive
thresholds are determined by the order statistics of the previous
$m$ arrivals. Thus, it is reasonable to expect that, at least for a
stationary input, the accuracy of the classification will increase
if we obtain these thresholds using a longer history (than the
preceding $m$ arrivals). However, the increase of history may reduce
the adaptability and add to the complexity of the algorithm.

Here, we describe one such improved comparison splitter that is
based on the order statistics of the preceding $ml, l\ge 1$ arrivals
and parameterized by $(m,l)$. Among other reasons, we continue to
use the order statistics since the ordered list is easy to maintain
dynamically. The splitter works as follows. At the time of arrival
of a new job $i$, the algorithm maintains the job sizes of the
previous $ml$ arrivals, and orders them as $\tilde{B}_{i1}\ge
\tilde{B}_{i2}\ge \cdots \ge \tilde{B}_{(i,ml)}$; when needed, we
use the notation $\tilde{B}_{(i,j)}\equiv \tilde{B}_{ij}$ for
improved clarity. We pick the subsequence $\tilde{B}_{(i,jl)}, 1\leq
j \leq m$ as the thresholds with
$\tilde{B}_{(i,0)}=\infty,\tilde{B}_{(i,(m+1)l)}=0$, and then,  the
new arrival is grouped into class $k$ if its size lies in $[
\tilde{B}_{(i,kl)}, \tilde{B}_{(i,(k-1)l)})$, $1\leq k \leq m+1$.

In terms of engineering applications, this refined splitting
algorithm is appealing because it can improve the accuracy for other
types of arrivals, such as  dependent processes and concentrated
discrete distributions of job sizes. In order to measure how well
the refined splitter classifies the input sequence, we compare the
output of the refined splitter with a perfectly ordered input
sequence. Denote the input sequence by $\{B_i\}_{1\leq i \leq n}$,
the output of the refined splitter by $\{O_i\}_{1\leq i \leq n}$,
and the increasing order of $\{B_i\}$ by $\{S_i\}_{1\leq i \leq n}$.
The output of the refined splitter $\{O_i\}$ is obtained by
concatenating sequentially class $j+1$ after class $j$ for all $1
\leq j\leq m-1$. Now, define the error rate to be
 $$\eta(n) \eqdef \frac{1}{n}\sum_{i=1}^{n}
 \ind\{O_i\neq S_i\}.$$
 \begin{lemma}
 \label{lemma:refined}
  For any fixed $0<\epsilon<1$, fixed $m$ large
enough, and an i.i.d. input sequence $\{B_i\}_{1\leq i\leq n}$
taking finite number of values $\Pr[B_1=b_j]=p_j, 1\leq j \leq v$
with the splitter initialized by $ml$ i.i.d. random copies of $B_1$
that are independent of $\{B_i\}_{1\leq i\leq n}$, there exists
$H_{\epsilon}, \xi_{\epsilon}>0$, such that
\begin{equation*}\label{eq:refined}
\Pr[\eta(n)> \epsilon] \leq  H_{\epsilon }e^{-\xi_{\epsilon}
\min(l,n)}.
\end{equation*}
\end{lemma}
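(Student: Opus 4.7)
The plan is to decompose the error rate $\eta(n)$ into three contributions: (a) empirical thresholds deviating from the true population quantiles of the common law $F$ of $B_1$, (b) intrinsic ``boundary'' disorder of the ideal (true-quantile) splitter caused by discrete values straddling a quantile interval, and (c) fluctuations of the input counts $V_j \eqDef |\{i\le n:B_i=b_j\}|$ around $np_j$. Each contribution is made smaller than $\epsilon/4$ by choosing $m$ large enough and then invoking $l$- and $n$-concentration, with Chernoff-type tails producing the exponential rates.

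First, fix $m\ge 4v/\epsilon$ and, generically, such that the true quantiles $q_k\eqDef F^{-1}(k/m)$ never coincide with a jump point of $F$; then at most $v$ of the $m{+}1$ quantile intervals $[q_k,q_{k-1})$ contain more than one atom of $F$, and these contribute total probability $\le v/m\le\epsilon/4$. For each fixed time $i$ and class index $k$, the threshold $\tilde{B}_{(i,kl)}$ is the $(kl)$-th order statistic of $ml$ i.i.d.\ copies of $B_1$, and by Chernoff's bound applied to the binomial $\sum_{j=1}^{ml}\ind\{B_{i-j}\le q_k\}$ we have $\Pr[\tilde{B}_{(i,kl)}\ne q_k]\le 2e^{-\xi_1 l}$ for some $\xi_1>0$ depending on $F$ and $m$. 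Letting $Z_i\in\{0,1\}$ indicate that some threshold at time $i$ is wrong, a union bound over $k$ and the stationarity of $\{Z_i\}$ with Markov's inequality give
\begin{equation*}
\expect Z_i\le 2me^{-\xi_1 l},\qquad \Pr\Bigl[\sum_{i=1}^{n}Z_i>\frac{\epsilon n}{4}\Bigr]\le \frac{8m}{\epsilon}\,e^{-\xi_1 l}.
\end{equation*}
Separately, Chernoff applied to each $V_j$ together with a union bound over $j$ produce an event $E_2$ of probability $\ge 1-2ve^{-\xi_2 n}$ on which $|V_j/n-p_j|\le\epsilon/(4v)$ for every $j$.

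On the intersection of these good events, at most $\epsilon n/4$ jobs are classified under a wrong threshold and are thus potentially misrouted; at most $\epsilon n/4$ of the remaining jobs land in one of the multi-value quantile intervals and may suffer intra-class disorder relative to $\{S_i\}$; and at most $\epsilon n/4$ positions are displaced by fluctuations of $V_j$ away from $np_j$ once the single-value classes are concatenated in monotone order of value. Summing, at most $\epsilon n$ positions carry $O_i\ne S_i$, so $\eta(n)\le\epsilon$ on the good event. Taking $\xi_\epsilon=\min(\xi_1,\xi_2)$ and letting $H_\epsilon$ absorb the constants yields the stated inequality.

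The main obstacle is the second step: one must (i) select $m$ carefully to prevent any $k/m$ from coinciding with a value of $F(b_j)$, since otherwise the empirical quantile fails to concentrate on a single atom; and (ii) handle the dependence among the $\{Z_i\}$ (they share overlapping windows of $ml$ samples), which is why we use Markov rather than Chernoff in the averaging over $i$. Markov is sufficient here because Step 2 is only responsible for the $l$-exponential rate, while the $n$-exponential rate arises independently from Step 3, and the two combine to give the claimed $\min(l,n)$ decay.
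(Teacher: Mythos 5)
Your high-level decomposition (threshold errors controlled by the window length $l$, count fluctuations controlled by $n$, a small budget for boundary effects of the choice of $m$) has the right flavor, and the use of Markov rather than Chernoff to average the dependent threshold-error indicators matches the paper's handling of $N_\epsilon$ in (\ref{eq:refine4}). However, there is a genuine gap at the final step, which is in fact the technical crux of the argument. You assert, without proof, that ``at most $\epsilon n/4$ positions are displaced by fluctuations of $V_j$ away from $np_j$ once the single-value classes are concatenated.'' This is not at all immediate: a single misrouted job shifts the boundary of its class, and that shift in principle propagates to every subsequent position of the concatenated output $\{O_i\}$. Turning a bound on the \emph{number} of misroutings and a bound on the \emph{count} deviations into a bound on the \emph{Hamming distance} requires a careful positional accounting. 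The paper does this by introducing an explicit deterministic reference sequence $\{d_i\}$ with $d_i=b_k$ for $\lfloor nq_{k-1}\rfloor+1\le i\le\lfloor nq_k\rfloor$, comparing \emph{both} $O$ and $S$ to $d$, and then proving claim (\ref{eq:refine8}): the per-class Hamming error against $d$ is bounded by the absolute deviations of the cumulative routed counts $\sum_{j\le k}N_j^r$ from $\lfloor nq_k\rfloor$ and $\lfloor nq_{k-1}+1\rfloor$, plus $2N_\epsilon$. Establishing this requires the four-case analysis on the positions $\tau_k=\min\{i\in\mathcal{R}_k:O_i=b_k\}$ and $\sigma_k=\max\{i\in\mathcal{R}_k:O_i=b_k\}$ relative to the target block $[\lfloor nq_{k-1}+1\rfloor,\lfloor nq_k\rfloor]$. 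Nothing in your write-up supplies that step, and without it the conclusion ``$\eta(n)\le\epsilon$ on the good event'' does not follow.

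Two further points worth flagging. First, the paper takes $m$ large enough that each atom $b_k$ receives its own target class (this is what the condition on $m$ and the two-sided definition of the ``right class'' via $\lfloor mq_{k-1}\rfloor$ and $\lceil mq_{k-1}-1\rceil$ achieve), which makes the sorted-output comparison clean; your alternative of allowing multi-atom classes but budgeting $\epsilon/4$ of probability for them is a genuinely different design, but it adds an extra intra-class disorder term that you would also have to bound by a separate concentration argument, and you do not. Second, your selection of a ``generic'' $m$ so that no $k/m$ equals $F(b_j)$ is delicate when the $F(b_j)$ are rational: for infinitely many $m$ some $mq_{k}$ will be an integer. The paper's two-option definition of the right class is specifically designed to be robust to this boundary case, so you should either prove that a suitable $m$ exists for your argument or adopt a similarly robust definition. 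Finally, as a smaller quantitative matter, bounding each $|V_j-np_j|\le\epsilon n/(4v)$ is not enough to control the \emph{cumulative} boundary deviations $|\sum_{j\le k}V_j-nq_k|$ uniformly with the right constant once you sum over classes; the paper instead applies Chernoff directly to the cumulative sums $\sum_{j=0}^{k}N_j$ inside each per-class error term, avoiding this loss.
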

The {\bf proof} of Lemma~\ref{lemma:refined} is presented in the
appendix.

\subsection{Queueing Analysis} \label{sec:qana} In this section, we study the queueing
performance of our comparison based scheduler assuming that jobs
arrive according to a stationary renewal process $\{T_n\}$,
$T_{-1}<0\le T_0$ with finite mean $\expect[T]<\infty$, where $T\eqd
T_1-T_{0}$. The job sizes $\{B_n\}$ before the splitting are i.i.d
and independent of $\{T_n\}$. To simplify the notation and analysis
in this section, we say that the $i$th arrival to class $k$ is equal
to $B_i^{(k)}=B_i\ind\{\tilde{B}_{i(k+1)}\le B_i< \tilde{B}_{ik}\}$.
This notation takes into account all the original arrival points
even if $B_i\ind\{\tilde{B}_{i(k+1)}\le B_i< \tilde{B}_{ik}\}=0$.
The addition of zero size jobs in each class has no impact on
queueing, but simplifies the exposition.

In Theorem \ref{theorem:isolate}, we characterize the workload
asymptotics when each class is served in isolation. Then, in
Theorem \ref{theorem:static}, we study the workload asymptotics of
each individual class assuming that all the classes are served
jointly according to a static priority discipline.

\subsubsection{Queueing in Isolation}
\label{ss:qi}We first study the queueing characteristics of each
class $k$ when it is served in isolation with capacity $c_k$,
$0\le k\le m$. We use $W^{(k)}$ to denote the stationary workload
of class $k$ and define $B^{(k)}\eqd B_1^{(k)}$.

\begin{theorem}\label{theorem:isolate}
If $\Pr[B>x]=l(x)/x^{\alpha} \in {\cal R}_{\alpha},\, \alpha > 1$
and $\expect[B^{(k)}] <c_k\expect [T]$, then, as $x\to \infty$,
\begin{align*}
  \Pr\left[ W^{(k)} >x  \right]&\sim \frac{1}{c_k \expect
  [T]-\expect[
  B^{(k)}]}\int_{x}^{\infty} \Pr \left[B^{(k)} > u
  \right]du \\
  &\sim \frac{1}{(k+1)(c_k \expect [T]-
  \expect[B^{(k)}])}{m\choose k}\frac{l(x)^{k+1}}{x^{\alpha k+\alpha -1}}.
\end{align*}
\end{theorem}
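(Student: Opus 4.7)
The plan is to split the proof into two parts: (i) determine the tail of $B^{(k)}$ and integrate to obtain the second $\sim$ in the statement; (ii) establish the workload (first $\sim$) asymptotic, which is a Pakes--Veraverbeke-type result complicated by the $m$-dependence of $\{B_i^{(k)}\}$. For (i), since $B\in\mathcal{R}_\alpha\subset\mathcal{IR}\subset\mathcal{L}$, Lemma~\ref{lemma:orderstat2} applies to the window $(B_{1-m},\ldots,B_1)$ and yields
\begin{equation*}
\Pr[B^{(k)}>x]\sim\frac{1}{k+1}\binom{m}{k}\Pr[B>x]^{k+1}=\frac{1}{k+1}\binom{m}{k}\frac{l(x)^{k+1}}{x^{\alpha(k+1)}},
\end{equation*}
so $B^{(k)}\in\mathcal{R}_{\alpha(k+1)}$ with index $\alpha(k+1)>1$. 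Karamata's theorem on the integrated tail then converts this marginal asymptotic into the second $\sim$ of the theorem, so (i) follows once (ii) is proved.

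For (ii), by Loynes' construction $W^{(k)}=\sup_{n\ge 0}\sum_{i=1}^n(B_{-i}^{(k)}-c_k T_{-i})$. The summands $X_i=B_i^{(k)}-c_k T_i$ form a stationary $m$-dependent sequence, since $B_i^{(k)}$ is a function of $B_{i-m},\ldots,B_i$ alone, with negative drift $\expect[X_1]=\expect[B^{(k)}]-c_k\expect[T]<0$ and regularly varying positive tail of index $\alpha(k+1)>1$. For an i.i.d.\ sequence with this marginal, Pakes--Veraverbeke delivers the first $\sim$ directly; the task is therefore to remove the i.i.d.\ assumption. My plan is to sandwich $W^{(k)}$ between two auxiliary workloads whose service times are genuinely i.i.d. For the upper bound, group indices into consecutive blocks of length $m+1$ separated by spacer gaps of width $m$ so that the block-service-times are independent; by the single-big-jump principle for regularly varying summands, each block-tail equals $(m+1)\Pr[B^{(k)}>x](1+o(1))$, and Pakes--Veraverbeke on this auxiliary queue, combined with letting the relative spacer mass vanish using the regular variation of $l$, gives an upper bound matching the claimed constant. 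For the lower bound, isolate a single large $B_j^{(k)}$: on the event $\{B_j^{(k)}>(1+\epsilon)x\}$ combined with the law-of-large-numbers drift of the surrounding summands, the Loynes' supremum already exceeds $x$, and summing these essentially disjoint contributions over $j$ via a renewal argument reproduces the integrated-tail constant.

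The main technical obstacle is rigorously transferring the single-big-jump principle from the $m$-dependent stream to the stationary supremum, and quantifying the error introduced by the blocking/spacer and by the overlap of $m$ indices between adjacent blocks. Subexponentiality of $\Pr[B^{(k)}>x]$ already guarantees that a single jump dominates inside each finite block, so the heart of the work is showing that the spacer/overlap terms contribute $o\bigl(\int_x^\infty\Pr[B^{(k)}>u]\,du\bigr)$. I would control these using Potter-type bounds on the slowly varying function $l$ and finite-moment estimates on $T$, and close the sandwich by letting the spacer width grow slowly with $x$. Once this is established, Pakes--Veraverbeke on the auxiliary i.i.d.\ queues and the lower-bound renewal argument yield matching bounds, completing the proof of both $\sim$ in the theorem.
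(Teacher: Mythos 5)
Your outline matches the paper's in its key structural decisions: Loynes' construction to reduce $\Pr[W^{(k)}>x]$ to $\Pr[\sup_n S_n > x]$ with $m$-dependent increments $X_i = B_i^{(k)} - c_k(T_i - T_{i-1})$; Lemma~\ref{lemma:orderstat2} plus Karamata for the marginal tail of $B^{(k)}$ and the second asymptotic equivalence; Pakes for the i.i.d.\ baseline; and a single-big-jump argument at the core. Where you diverge is in how you remove the $m$-dependence. The paper does \emph{not} use the big-block/small-block (spacer) construction you propose. Instead it splits the time axis at $n = Hx$. For $n \le Hx$ it truncates the increments into small and large parts, shows via Lemma~3.2(i) of \cite{JEMO03} that the small-part supremum is negligible, and bounds the large-part contribution by directly showing that the probability of two or more ``big'' increments among the $Hx$ (potentially dependent) terms is $o\bigl(x(\Pr[B>x])^{k+1}\bigr)$ — using a two-case union bound for $|i-j|\le m$ versus $|i-j|>m$ that exploits the $m$-dependence explicitly. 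For $n \ge Hx$ it uses Lemma~\ref{eq:psum}, which decomposes the $m$-dependent walk into $m$ interleaved i.i.d.\ subwalks (residue classes mod $m$) so that Pakes can be applied to each; this term vanishes as $H \to \infty$ by Lemma~\ref{lemma:geqHx}. Your spacer approach is a legitimate alternative route to the same reduction-to-i.i.d.\ principle, but it is the more delicate of the two to execute: dropping the spacer terms gives a \emph{lower} bound on the workload, not an upper bound, so the upper bound requires a genuine big-block/small-block decomposition in which both the big-block supremum and the small-block supremum are controlled separately and then added. You also have the roles of block and spacer reversed in your closing remarks — to make the spacer's relative contribution vanish you must let the \emph{block} length grow while keeping the spacer length fixed at $m$ (independence requires a gap of exactly $m$, no more), not ``let the spacer width grow slowly with $x$.'' Finally, you need to be careful that the constant recovers $1/(c_k\expect[T]-\expect[B^{(k)}])$ rather than a blocked version of it; this works out only in the limit of growing block length. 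So the architecture is sound and parallel to the paper's in spirit, but your specific blocking mechanism is both different from and harder to make airtight than the subsampling-plus-truncation route the paper actually takes.
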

\begin{remark}
{\rm Note that this theorem indicates that the workload distribution
decays faster for larger $k$}. To be more specific, the tail of the
workload distribution for class $k$ decays as $x(\Pr[B>x])^{k+1}$
and, thus, the jobs will have the waiting time distribution of the
same order if served under FIFO. If, for example, each class were
served according to PS/FBPS, one can expect that the waiting times
will be of the same order as $(\Pr[B>x])^{k+1}$,  as in our Theorem
\ref{theorem:main}. However, this is much more difficult to prove
because of the dependency in $\{B_n^{(k)}\}$.
\end{remark}
\begin{remark}
{\rm Note that the result of Theorem \ref{theorem:isolate} is of the
same form as the one derived by \cite{PAK75} for the $GI/GI/1$
queue. However, Pakes's result does not apply directly to our case
since $\{B_n^{(k)}\}$ is $m$-dependent. For generalizations of
Pakes's result to dependent processes see \cite{JLA95g,ASS99}. Note
that, in principle, the approach from \cite{ASS99} can be applied to
prove our theorem. Instead, we present a direct proof that may be of
independent interest.}
\end{remark}
In order to prove this theorem, we need the following definitions
and lemmas. Define the partial sum of a stationary process
$\{X_n\}_{n\in
   \nat}$, where $X_n\in {\cal R}_\alpha$, as follows, $S_0=0$,
   \begin{equation}\label{eq:pts}S_n = \sum_{i=1}^{n}X_i,\;\; n\ge 1.\end{equation}
\begin{definition}
For a stationary process $\{X_n\}_{n\in \nat}$ and $m\in \nat$, we
say the process is $m$-dependent if $X_n$ is independent of
$\{X_i\}_{i< n-m}$ for
   all $n$.
\end{definition}

\begin{lemma}\label{eq:psum}
  If we define $$S_n^{(m)} \eqdef \sum_{i=0}^{\lfloor \frac{n}{m} \rfloor}X_{im+1},$$ then
  \begin{equation*}
   \Pr \left[\sup_{n\ge 0}S_n>x \right] \leq m \Pr \left[\sup_{n\ge 0}S_n^{(m)}> \frac{x}{m} \right].
  \end{equation*}
\end{lemma}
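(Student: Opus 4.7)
The plan is to prove this by a simple pigeonhole decomposition of the partial sums followed by a union bound, exploiting stationarity to identify the distributions of the resulting subsequence suprema.

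First, I would decompose the index set $\{1,2,\ldots,n\}$ into $m$ residue classes modulo $m$. For each $j \in \{1,\ldots,m\}$, let
\begin{equation*}
T_n^{(j)} \eqdef \sum_{i \ge 0 :\, im+j \le n} X_{im+j},
\end{equation*}
so that $S_n = \sum_{j=1}^m T_n^{(j)}$. Then the pigeonhole principle yields $\{S_n > x\} \subseteq \bigcup_{j=1}^m \{T_n^{(j)} > x/m\}$, since if every $T_n^{(j)} \le x/m$ we would have $S_n \le x$. Taking the sup over $n \ge 0$ and applying this inclusion termwise, I get $\{\sup_{n \ge 0} S_n > x\} \subseteq \bigcup_{j=1}^m \{\sup_{n \ge 0} T_n^{(j)} > x/m\}$, so the union bound gives
\begin{equation*}
\Pr\bigl[\sup_{n \ge 0} S_n > x\bigr] \le \sum_{j=1}^m \Pr\bigl[\sup_{n \ge 0} T_n^{(j)} > x/m\bigr].
\end{equation*}

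The final step is to invoke stationarity of $\{X_n\}_{n\in\nat}$: the shifted sequence $(X_j, X_{j+1}, X_{j+2}, \ldots)$ has the same joint distribution as $(X_1, X_2, X_3, \ldots)$, and in particular the sub-sampled sequence $(X_j, X_{m+j}, X_{2m+j}, \ldots)$ has the same joint distribution as $(X_1, X_{m+1}, X_{2m+1}, \ldots)$. Hence $\sup_{n \ge 0} T_n^{(j)} \stackrel{d}{=} \sup_{n \ge 0} S_n^{(m)}$ for every $j$, which upgrades the union bound to $m\,\Pr[\sup_{n \ge 0} S_n^{(m)} > x/m]$ as claimed.

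There is no real obstacle here — the argument is essentially a two-line pigeonhole bound. The only minor wrinkle I would be careful about is that the stated definition of $S_n^{(m)}$ uses the upper index $\lfloor n/m \rfloor$, which is not exactly the natural cutoff $\lfloor (n-1)/m \rfloor$ one obtains from the condition $im+1 \le n$, but this discrepancy involves at most one extra summand and is absorbed once the supremum over $n$ is taken. Note that $m$-dependence is not needed for the inequality itself; it enters only later, when $S_n^{(m)}$ is exploited as a random walk with i.i.d.\ increments in the sequel.
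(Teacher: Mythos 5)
Your proof is correct and follows the same line as the paper: decompose $S_n$ into the $m$ residue-class partial sums, apply pigeonhole and a union bound, and use stationarity to identify each residue-class supremum in distribution with $\sup_{n\ge0}S_n^{(m)}$. If anything, you were more careful than the paper on the index bookkeeping: the paper introduces $S_n^{(m,j)}=\sum_{i=0}^{\lfloor n/m\rfloor}X_{im+j}$ with the \emph{same} upper limit for every $j$, so that $\sum_{j=1}^m S_n^{(m,j)}=S_{(\lfloor n/m\rfloor+1)m}$, and the asserted pointwise inequality $S_n\le\sum_{j=1}^m S_n^{(m,j)}$ does not literally hold since the increments $X_i$ can be negative; your $T_n^{(j)}$ give the exact identity $S_n=\sum_{j=1}^m T_n^{(j)}$, which repairs that step cleanly. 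Your remark about the empty-sum cutoff and the resulting $\max(\cdot,0)$ is also right: for $x>0$ it is harmless, and that is the only regime in which the lemma is invoked.
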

\begin{proof}
Define $$S_n^{(m,j)} = \sum_{i=0}^{\lfloor \frac{n}{m}
\rfloor}X_{im+j},$$ where $1\leq j\leq m$, and observe that $S_n
\le \sum_{j=1}^{m} S_n^{(m,j)}.$ Therefore,
\begin{align*}
\Pr \left[\sup_{n\ge 0}S_n>x \right] &=  \Pr \left[\sup_{n\ge 0}
\sum_{j=1}^{m} S_n^{(m,j)}>x \right]
  \leq \Pr \left[ \sum_{j=1}^{m} \sup_{n\ge 0}S_n^{(m,j)}>x \right] \nonumber \\
   &\leq \sum_{j=1}^m  \Pr \left[ \sup_{n\ge 0}S_n^{(m,j)}> \frac{x}{m}
   \right]\leq m  \Pr \left[ \sup_{n\ge 0}S_n^{(m)}> \frac{x}{m}
   \right],
\end{align*}
where the last equality follows from the stationarity of
$\{X_n\}$.\qed
\end{proof}

\begin{lemma}\label{lemma:geqHx}
   For a stationary $m$-dependent process $\{X_n\}_{n\in \nat}$ with
   mean
   $\expect X_1=-\delta<0$ and $X_1\in {\cal R}_{\alpha}$, we have
  \begin{equation*}
   \Pr \left[\sup_{n\geq Hx}S_{n}> x \right] \leq \frac{1}{H^{\alpha-1}}
   O\left(\int_x^\infty\Pr[X_1>u]du\right) .
  \end{equation*}
\end{lemma}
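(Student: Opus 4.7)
My strategy reduces the problem to an i.i.d.\ random walk and then controls the tail of the supremum started at time proportional to $x$. Using the decomposition $S_n\leq\sum_{j=1}^m S_n^{(m,j)}$ from the proof of Lemma~\ref{eq:psum} together with stationarity gives $\Pr[\sup_{n\geq Hx}S_n > x]\leq m\,\Pr[\sup_{n\geq Hx}S_n^{(m)} > x/m]$; by $m$-dependence the subsampled sequence $\{X_{jm+1}\}_{j\geq 0}$ is i.i.d.\ with mean $-\delta$ and tail $\bar F\in\mathcal{R}_\alpha$ (writing $\bar F(u):=\Pr[X_1>u]$). Denote the resulting i.i.d.\ walk by $T_k$ and set $y:=x/m$, $N:=\lfloor Hy\rfloor$; it then suffices to bound $\Pr[\sup_{k\geq N}T_k>y]$ by $O(H^{1-\alpha})\int_y^\infty\bar F(u)\,du$.

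The key event-splitting is
\[
\bigl\{\sup_{k\geq N}T_k>y\bigr\}\subseteq\bigl\{T_N>-\tfrac{\delta N}{2}\bigr\}\cup\bigl\{\sup_{k\geq 0}(T_{N+k}-T_N)>y+\tfrac{\delta N}{2}\bigr\},
\]
which holds because on the complement of the first event $T_N\leq-\delta N/2$, so any subsequent excursion above $y$ must come from the post-$N$ increments alone. By i.i.d.-ness and stationarity, the post-$N$ walk is an independent copy of $T_\bullet$, so the second probability is bounded via the Pakes--Veraverbeke tail asymptotic $\Pr[\sup_k T_k>z]\lesssim \delta^{-1}\int_z^\infty\bar F(u)\,du$ at $z=y(1+H\delta/2)$; regular variation of the integrated tail (of index $1-\alpha$) then delivers $(1+H\delta/2)^{1-\alpha}\int_y^\infty\bar F\,du=O(H^{1-\alpha})\int_y^\infty\bar F(u)\,du$.

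For the first event, $T_N$ must exceed its mean by $\delta N/2$, a deviation \emph{linear} in $N$. A Nagaev-type heavy-tail large-deviation inequality for regularly varying increments gives $\Pr[T_N>-\delta N/2]\leq C N\,\bar F(\delta N/2)$. Using $u\bar F(u)\sim(\alpha-1)\int_u^\infty\bar F$ from regular variation together with $\int_{Hy}^\infty\bar F\sim H^{1-\alpha}\int_y^\infty\bar F$, this piece is also $O(H^{1-\alpha})\int_y^\infty\bar F(u)\,du$. Combining the two bounds and undoing the reduction $y=x/m$ (the factor $1/m$ inside the integral is absorbed by regular variation of the integrated tail) yields the conclusion.

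The main technical obstacle is ensuring the large-deviation estimate $\Pr[T_N-\expect T_N>\delta N/2]\leq CN\bar F(\delta N/2)$ holds for the full range $\alpha>1$, not merely the finite-variance range $\alpha>2$ where the classical Nagaev inequality is usually stated. Because the deviation is linear in $N$, the single-big-jump contribution $N\bar F(N)$ dominates throughout $\alpha>1$, so the estimate is correct; however, establishing it cleanly may require either invoking an extended result for subexponential/regularly varying increments or carrying out a direct truncation argument that isolates the single largest step and bounds the truncated remainder by a Bernstein/Fuk--Nagaev-type inequality.
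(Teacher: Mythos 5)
Your proof is correct and follows the same basic skeleton as the paper's: first reduce to an i.i.d.\ walk via the subsampling Lemma~\ref{eq:psum}, then split the event at time $N\approx Hx$ so that the post-$N$ supremum is an independent copy of $M=\sup_{k\geq 0}T_k$ handled by the Pakes asymptotic, and finally control the event that the walk has not drifted far enough negative by time $N$. Where you genuinely diverge from the paper is in that last piece. The paper writes $\Pr[S_{Hx}>-\delta Hx/2]$ as $\Pr[S^{\delta}_{Hx}>\delta Hx/4]$ for the shifted increments $X^{\delta}_n=X_n+3\delta/4$, which still have negative mean $-\delta/4$, and then bounds this by $\Pr[\sup_n S^{\delta}_n>\delta Hx/4]$ and applies Pakes \emph{again} to the shifted walk --- a bootstrapping trick that reuses the very same supremum bound and therefore never needs any finite-horizon large-deviation estimate. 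You instead invoke a Nagaev/Fuk-type bound $\Pr[T_N-\expect T_N>\delta N/2]\leq CN\bar F(\delta N/2)$, which is a heavier hammer. You are correct that the bound does hold for the full range $\alpha>1$ because the deviation is linear in $N$ and the single-big-jump term dominates, but as you yourself flag, establishing it cleanly requires either citing the subexponential/regularly-varying extension of Nagaev or carrying out a truncation argument. The paper's shift-and-reuse device sidesteps that extra machinery entirely and is the more economical route given that Pakes's result is already being imported; your route is equally valid but imports an additional nontrivial ingredient whose proof for $\alpha\in(1,2]$ would need to be spelled out or referenced.
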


\begin{proof} For simplicity of notation, in this section, we assume that
$Hx\in \nat$. Then, we define $M\eqdef \sup_{n\geq 0}S_{n}$ with
$\expect[X_n]=-\delta$, and note that
 \begin{equation*}
 \sup_{n\geq Hx}S_{n}=
S_{Hx}+\sup_{n\geq Hx}(S_{n}-S_{Hx}).
\end{equation*}
Since the process $\{X_n\}$ is stationary, we obtain
$$\sup_{n\geq
Hx}(S_{n}-S_{Hx})\eqd M,$$ and therefore, $\Pr \left[ \sup_{n\geq
Hx}S_{n}>x \right]$ is upper bounded by
\begin{align*}
\Pr \left[S_{Hx}+\frac{\delta Hx}{2}+\sup_{n\geq Hx}(S_{n}-S_{Hx})
-\frac{\delta Hx}{2}> 0 \right] & \leq \Pr\left[S_{Hx}+
\frac{3\delta Hx}{4}>\frac{\delta Hx}{4} \right]+ \Pr
\left[M>\frac{\delta
Hx}{2} \right]\nonumber\\
& \eqdef I_1+I_2.
\end{align*}
{From} the result of \cite{PAK75} and Lemma \ref{eq:psum},
recalling that $X_1 \in {\cal R}_{\alpha}$, we have
\begin{equation}\label{eq:I2}
 I_2 \leq \frac{1}{H^{\alpha-1}}O\left(\int_x^\infty\Pr[X_1>u]du\right).
\end{equation}
Similarly, by defining $X^{\delta}_n=X_n+(3\delta)/4$ with the
partial sum $S_n^{\delta}=\sum_1^nX_i^{\delta}$ and noting that
$S^{\delta}_{Hx}\le \sup_{n\ge 0}S^{\delta}_n$, we obtain
\begin{equation}\label{eq:hI1}
 I_1 \leq \Pr \left[\sup_n S^{\delta}_n > \frac{\delta Hx}{4} \right]\leq
 \frac{1}{H^{\alpha-1}}O\left(\int_x^\infty\Pr[X_1>u]du\right).
\end{equation}
Combining (\ref{eq:I2}) and (\ref{eq:hI1}) completes the
proof.\qed
\end{proof}

\noindent{\bf Proof of Theorem \ref{theorem:isolate}:} By the
classical result of \cite{LOY62} (see also Chapter~2.2 of
\cite{BAB94}),
 we have
 \begin{equation*}
       W^{(k)} \eqd \left( W^{(k)}(T_{-1})+ B^{(k)}_{-1} + c_kT_{-1}
       \right)^+,
 \end{equation*}
 where $W^{(k)}(T_{-1})$ is the stationary workload observed at
 the moment $T_{-1}$. Furthermore,
 $W^{(k)}(T_{-1})\eqd \sup_{n\ge 0}S_n$, with $S_n = \sum_{i=1}^{n}X_i, n\ge 1, S_0=0$
  and $X_i \eqdef  B_{i}^{(k)} - c_k
(T_{i}-T_{i-1})$. Next, observe that for $x>0$
 \begin{align}\label{eq:isototal}
  \Pr[W^{(k)}(T_{-1})>x] &=\Pr\left[\sup_{n\ge 1}S_n>x \right]
 \le \Pr\left[ \sup_{n\leq Hx}S_{n}>x \right]+ \Pr\left[ \sup_{n\geq Hx}S_{n}>x
  \right]\nonumber \\&\le \Pr \left[\sup_{n\ge 1}\underline{S}_n^\epsilon>
\delta x \right]+\Pr \left[\sup_{1\le n\le
Hx}\overline{S}_n^\epsilon>(1-\delta)x \right]+ \Pr\left[
\sup_{n\geq Hx}S_{n}>x
  \right] \nonumber\\
&\quad =I_1(x)+I_2(x)+I_3(x),
 \end{align}
where ${\overline{X}} _i^\epsilon=X_i\ind\{X_i>\epsilon
 x\}$, ${\underbar{X}} _i^\epsilon=X_i\ind\{X_i\leq \epsilon
 x\}$,  and ${\overline{S}}_{n}^{\epsilon} =\sum_{i=1}^{n} \left( {\overline{X}}
 _i^\epsilon+ \expect[X_i]+\delta
 \right)$, ${\underline{S}}_{n}^{\epsilon} =\sum_{i=1}^{n} \left( {\underline{X}}
 _i^\epsilon- \expect[X_i] -\delta
 \right)$ are defined for some $\epsilon>0, |\expect[X_1]|>\delta>0$.

First, let us prove an \emph{upper bound } for
(\ref{eq:isototal}).
 By Lemma 3.2(i) in \cite{JEMO03}, for any $\beta>0$, there
exists $\epsilon>0$ such that
\begin{equation}\label{eq:isoI1}
I_1(x)=o(x^{-\beta}).
\end{equation}
 Furthermore,
define $N_k=\sum_{i=1}^{Hx}\ind\{\overline{X}_i^\epsilon>0\}, 0\le
k\le m$; note that ${\overline{X}} _i^\epsilon$ depends on the
class index $k$ since $X_i=B_{i}^{(k)} - c_k (T_{i}-T_{i-1})$. To
simplify the notation, we assume that $Hx$ is an integer. Now,
$\Pr[ N_k \geq 2 ]$ is upper bounded by
\begin{align*}
   {{Hx}\choose{1}} \Pr \left[B^{(k)}> \epsilon
  x\right]{{m-1}\choose {1}}\Pr\left[B> \epsilon
  x\right]+ {{Hx}\choose{2}} \left(\Pr \left[B^{(k)}> \epsilon
  x\right]\right)^2=o\left( x\left(\Pr[B> x]\right)^{k+1} \right).
\end{align*}
In the preceding expression, the first term bounds the sum of
probabilities
$\Pr[\overline{X}_i^\epsilon>0,\overline{X}_j^\epsilon>0]$ for all
indices $1\le |i-j|\le m$ (note that in this case
$\overline{X}_i^\epsilon$ and $\overline{X}_j^\epsilon$ are
dependent); the second term provides a bound on the corresponding
sum when $|i-j|> m$, using the fact that $\overline{X}_i^\epsilon$
and $\overline{X}_j^\epsilon$ are independent. Therefore,
 \begin{align}\label{eq:isoI2}
  I_2(x) &\leq  \Pr \left[\sup_{0\le n\le
Hx}\overline{S}_n^\epsilon>(1-\delta)x , N_k=1 \right]+ \Pr[N_k \geq 2] \nonumber \\
&\le \sum_{n=1}^{Hx} \Pr\left[
{\overline{X}}_i^{\epsilon}+n(\expect[X_1]+\delta)>(1-\delta)x
\right]+o\left(x
\left(\Pr[B> x]\right)^{k+1} \right)\nonumber\\
&\leq \int_{0}^{\infty}\Pr[X_1>(1-\delta)x+u |\expect[X_1]-\delta
| ]du + o\left(x\left(
\Pr[B> x]\right)^{k+1} \right) \nonumber\\
&\sim  \frac{1}{|\expect[X_1]-\delta
|}\int_{(1-\delta)x}^{\infty}\Pr[X_1>u]du.
 \end{align}
 The estimate for $I_3(x)$ follows from Lemma \ref{lemma:geqHx}.
 Using this estimate, (\ref{eq:isototal}), (\ref{eq:isoI1}), (\ref{eq:isoI2}) and
passing $\delta, \epsilon \to 0$,  $H\to \infty$, we
 obtain the upper bound.

Next, we prove the \textit{lower bound} for (\ref{eq:isototal})
\begin{align}
\Pr[W^{(k)}(T_{-1})>x]&\ge \Pr\left[\sup_{1\le n\le Hx}S_n>x \right]
\ge \Pr\left[\sup_{1\le n\le Hx}\overline{S}_n^\epsilon>x\right] \ge
\Pr\left[\sup_{1\le n\le Hx}\overline{S}_n^\epsilon>x, N_k=1
\right]\nonumber\\
&= \sum_{n=1}^{Hx} \Pr\left[ {\overline{X}}_i^{\epsilon}+n(\expect
X_1+\delta)>x\right] \ge \int_1^{Hx}\Pr \left[X_1>x+u|\expect
X_1-\delta| \right]du, \nonumber
\end{align}
which by passing $x\rightarrow \infty$, using regular variation, and
then passing $\delta \rightarrow 0$, results in
\begin{align}
\Pr[W^{(k)}(T_{-1})>x]& \gtrsim \frac{1}{ c_k \expect[ T]-\expect
  [B^{(k)}]}\int_{x}^{\infty} \Pr \left[B^{(k)} > u
  ]\right]du.\label{eq:ubdt3}
\end{align}

Finally, for any $0<\epsilon<1$, we have
\begin{align}
  \Pr\left[W^{(k)}>x \right] & = \Pr\left[ \left( W^{(k)}(T_{-1})+ B^{(k)}_{-1} + c_kT_{-1}
       \right)^+ >x  \right]\nonumber \\
       &\leq \Pr\left[ W^{(k)}(T_{-1}) > (1-\epsilon)x  \right]  + \Pr\left[ B^{(k)} > \epsilon x  \right] \nonumber,
\end{align}
which, by (\ref{eq:ubdt3}), and then passing $\epsilon \to 0$,
yields
 \begin{equation}\label{eq:hp1}
      \Pr\left[W^{(k)}>x \right]  \lesssim  \Pr\left[ W^{(k)}(T_{-1}) > x
      \right].
 \end{equation}
 Also, since $W^{(k)}(T_{-1})$ is heavy-tailed and independent of $T_{-1}$, we
 obtain
 \begin{align}\label{eq:hp2}
  \Pr\left[W^{(k)}>x \right] & \geq \Pr\left[  W^{(k)}(T_{-1}) + c_kT_{-1}
        >x  \right]\sim \Pr\left[ W^{(k)}(T_{-1}) > x  \right].
\end{align}
Thus, (\ref{eq:hp1}) and (\ref{eq:hp2}) imply
 \begin{equation}
      \Pr\left[W^{(k)}>x \right]  \sim  \Pr\left[ W^{(k)}(T_{-1}) > x
      \right],
 \end{equation}
which, combined with (\ref{eq:ubdt3}), completes the proof of the
first asymptotics. The second asymptotic relationship of the
theorem is implied by Lemma~\ref{lemma:orderstat2}.\qed

\subsubsection{Static Priority} \label{ss:sp} In this subsection, we assume that there is only one server with
capacity $c$ and that the $m+1$ classes are served jointly with a
preemptive static priority (SP) discipline between classes.  Suppose
that the priorities of the classes are assigned in a decreasing
order of the class index $k$, $0\leq k\leq m$, i.e., class $k$
receives service only if classes $i, k+1\leq i\leq m$ are empty.
Denote by $W_0^{(k)}$ the stationary workload of class $k$ observed
at arrival point $T_0$. Let $\mu^{(k)} \eqdef \sum_{i=k}^{m}\expect
\left[B^{(i)} \right]$ and note that $\mu^{(0)}=\expect[B]$.
\begin{theorem}\label{theorem:static}
If $\Pr[B>x]=l(x)/x^{\alpha} \in {\cal R}_{\alpha},\, \alpha > 1$
and $\expect[B]< c \expect[T]$, then,  as $ x$ $ \to \infty$,
\begin{align*}
  \Pr\left[ W_0^{(k)} >x  \right]&\sim
   \frac{1}{c \expect[T]-
  \mu^{(k)} }\int_{x}^{\infty} \Pr \left[B^{(k)} > u \right]du\\
&\sim \frac{1}{(k+1)(c \expect[T]-
  \mu^{(k)})}{m\choose k}\frac{l(x)^{k+1}}{x^{\alpha k+\alpha
  -1}}.
\end{align*}
\end{theorem}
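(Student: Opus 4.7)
The plan is to reduce Theorem~\ref{theorem:static} to (a light extension of) Theorem~\ref{theorem:isolate} by exploiting the additive decomposition
$$W^{(k)}_0 \;=\; W^{(\geq k)}_0 \;-\; W^{(\geq k+1)}_0, \qquad W^{(\geq j)}_0 \eqdef \sum_{i=j}^{m} W^{(i)}_0.$$
The key observation is that, because the priority is static, lower-priority classes $0,1,\dots,j-1$ never preempt classes of index $\geq j$. Hence the aggregated process $\{W^{(\geq j)}(t)\}$ evolves as the workload of a single-server GI/GI/1 queue with inter-arrival times $\{T_n-T_{n-1}\}$, capacity $c$, and stationary $m$-dependent service sequence $B^{(\geq j)}_n \eqdef B^{(j)}_n+\cdots+B^{(m)}_n$. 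These are exactly the structural ingredients needed for the argument in Theorem~\ref{theorem:isolate}.

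First I would verify that the proof of Theorem~\ref{theorem:isolate} carries over to $W^{(\geq j)}_0$ with $(B^{(k)},c_k)$ replaced by $(B^{(\geq j)},c)$. The change is essentially cosmetic: the random walk $X_i = B^{(\geq j)}_i - c(T_i-T_{i-1})$ remains stationary and $m$-dependent, with negative drift $\mu^{(j)}-c\expect[T]<0$ guaranteed by $\mu^{(j)}\leq \expect[B]<c\expect[T]$, and the single-big-jump estimates in the proof rely only on the regular variation of $X_1^+$, which is inherited from $B\in \mathcal{R}_\alpha$. This yields
$$\Pr[W^{(\geq j)}_0>x]\;\sim\;\frac{1}{c\expect[T]-\mu^{(j)}}\int_x^{\infty}\Pr[B^{(\geq j)}>u]\,du.$$
Lemma~\ref{lemma:orderstat2} gives $\Pr[B^{(\geq j)}>u]=\Pr[B_0>u,\,B_0<\tilde B_j^{(m)}]\sim \tfrac{1}{j+1}\binom{m}{j}\Pr[B>u]^{j+1}\sim \Pr[B^{(j)}>u]$, so in fact
$$\Pr[W^{(\geq j)}_0>x]\;\sim\;\frac{1}{c\expect[T]-\mu^{(j)}}\int_x^{\infty}\Pr[B^{(j)}>u]\,du.$$

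The \emph{upper bound} for $\Pr[W^{(k)}_0>x]$ then follows at once from $W^{(k)}_0\leq W^{(\geq k)}_0$. For the \emph{lower bound} I would use the exact identity above with an intermediate cutoff $H(x)=x^{\theta}$, choosing $\theta\in\bigl((\alpha(k+1)-1)/(\alpha(k+2)-1),\,1\bigr)$; such $\theta$ exists because $\alpha>1$. Then
$$\Pr[W^{(k)}_0>x]\;\geq\;\Pr[W^{(\geq k)}_0>x+H(x)]\;-\;\Pr[W^{(\geq k+1)}_0>H(x)].$$
Since $H(x)=o(x)$, regular variation of the tail of $W^{(\geq k)}_0$ (with index $1-\alpha(k+1)<0$) gives $\Pr[W^{(\geq k)}_0>x+H(x)]\sim \Pr[W^{(\geq k)}_0>x]$. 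Meanwhile, the tail of $W^{(\geq k+1)}_0$ is of order $H(x)^{1-\alpha(k+2)}\,l(H(x))^{k+2}$, and the choice of $\theta$ makes this $o\bigl(x^{1-\alpha(k+1)}\,l(x)^{k+1}\bigr)=o(\Pr[W^{(\geq k)}_0>x])$. This matches the upper bound.

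\textbf{Main obstacle.} The genuinely new ingredient is the decomposition $W^{(k)}_0=W^{(\geq k)}_0-W^{(\geq k+1)}_0$; once it is in place, the remaining work is essentially routine. The one subtlety is that blindly subtracting two asymptotic formulas would be illegitimate, because the difference of two regularly varying functions need not be regularly varying. The cutoff $H(x)$ finesses this by using only the qualitative fact that $W^{(\geq k+1)}_0$ has a strictly lighter tail than $W^{(\geq k)}_0$. A secondary, routine check is that the proof of Theorem~\ref{theorem:isolate} (which is written for a single class in isolation) uses only $m$-dependence and regular variation, both of which are preserved under the aggregation $B^{(j)}+\cdots+B^{(m)}$; the Loynes-type random walk analysis, the split into $n\leq Hx$ and $n\geq Hx$, and Lemma~\ref{lemma:geqHx} all apply unchanged.
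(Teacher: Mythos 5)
Your proof is correct and takes essentially the same approach as the paper: both proofs aggregate classes $k,\dots,m$ into a top-priority super-class whose workload $\hat W^{(k)}\equiv W^{(\geq k)}$ evolves as a single GI/GI/1 queue of capacity $c$ with $m$-dependent service $\sum_{i=k}^m B^{(i)}_n$, obtain the upper bound from $W^{(k)}_0\leq W^{(\geq k)}_0$, and obtain the lower bound by subtracting the lighter tail of $W^{(\geq k+1)}_0$ via the decomposition $W^{(k)}_0=W^{(\geq k)}_0-W^{(\geq k+1)}_0$. The only cosmetic difference is your intermediate cutoff: you use $H(x)=x^\theta$ with a specific exponent window, whereas the paper uses $\epsilon x$ and passes $\epsilon\to 0$ at the end; both choices validly exploit the strict separation of tail indices between $W^{(\geq k)}_0$ and $W^{(\geq k+1)}_0$.
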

\begin{remark}
This result shows that the distribution of the workload $W_0^{(k)}$
behaves asymptotically as if class $k$ were served in isolation by a
system with capacity reduced by the mean job sizes of classes with
indices greater than $k$, which indicates a similar phenomenon as in
Theorem \ref{theorem:isolate}. Thus, our SP scheduling with
comparison splitter should approximate SRPT well.
\end{remark}
\begin{proof} Let $W^{(k)}(T_{n})$ be the stationary workload of class $k$
jobs at time $T_n$. First, we establish an \emph{upper bound}. For
$0\leq k\leq m$, we group all the arrivals of classes $k,\cdots,
m$ into a new class with the highest priority, while all the other
classes remain the same. The workload of the new class is denoted
as $\hat{W}^{(k)}(T_n)$, where
 $\hat{W}^{(k)}(T_n) \eqdef \sum_{i=k}^{m}W^{(i)}(T_n)$  and $\hat{W}^{(k)}_0$ represents a variable that is equal in
 distribution to $\hat{W}^{(k)}(T_n)$.
Clearly,
\begin{equation}\label{eq:boundW}
 W^{(k)}(T_n) \leq  \hat{W}^{(k)}(T_n),
\end{equation}
where the workload recursion for the new class satisfies
 \begin{align*}
  \hat{W}^{(k)}(T_{n+1}) &=
  \left( \hat{W}^{(k)}(T_{n})+ \sum_{i=k}^{m}B^{(i)}_{n+1} -c
  (T_{n+1}-T_n)
  \right) ^+.
 \end{align*}

Now, by Lemma \ref{lemma:orderstat2}, it is easy to see that, as
$x\to \infty$,
\begin{equation*}
 \Pr\left[\sum_{i=k}^{m}B^{(i)}_{n+1}>x \right] \sim \Pr \left[B^{(k)}_{n+1}>x \right],
\end{equation*}
and, using the same argument as in the proof of the upper bound in
Theorem \ref{theorem:isolate}, we obtain
\begin{equation}\label{eq:accumulate}
  \Pr\left[ \hat{W}_0^{(k)} >x  \right]\sim
   \frac{1}{c \expect[T]-
  \mu^{(k)} }\int_{x}^{\infty} \Pr \left[B^{(k)} > u \right]du,
\end{equation}
which, by (\ref{eq:boundW}), yields
\begin{equation}\label{eq:SPupper}
  \Pr\left[ W_0^{(k)} >x  \right]\lesssim
   \frac{1}{c \expect[T]-
  \mu^{(k)} }\int_{x}^{\infty} \Pr \left[B^{(k)} > u \right]du.
\end{equation}

 Next, we prove a \emph{lower bound}. For $\epsilon >0$ and $k<m$, we
have
\begin{align*}
  \Pr\left[W_0^{(k)}>x \right] &\geq \Pr \left[W_0^{(k)}>x,  \hat{W}_0^{(k+1)}\leq \epsilon  x \right]\geq \Pr \left[\hat{W}_0^{(k)}> (1+\epsilon)x, \hat{W}_0^{(k+1)}\leq \epsilon
  x \right]\nonumber \\
  &\geq \Pr \left[\hat{W}_0^{(k)}> (1+\epsilon)x \right] -\Pr \left[ \hat{W}_0^{(k+1)}> \epsilon x
  \right].
 \end{align*}
Using the same argument as for (\ref{eq:accumulate}) and passing
$\epsilon \to 0$ in the preceding inequality imply
\begin{equation*}
  \Pr\left[W_0^{(k)}>x \right] \gtrsim \frac{1}{c \expect[T]-
  \mu^{(k)} }\int_{x}^{\infty} \Pr \left[B^{(k)} > u \right]du.
  \end{equation*}
The same asymptotic inequality can be easily shown to hold for
$k=m$. This inequality, combined with (\ref{eq:SPupper}), completes
the proof of the first asymptotic relationship in Theorem
\ref{theorem:static}. The second asymptotics follows directly from
Lemma \ref{lemma:orderstat2}.\qed
\end{proof}

\section{Conclusion}\label{sec:con}
 We show in Theorem~\ref{theorem:main} that the
medium size heavy-tailed jobs can have asymptotically much shorter
sojourn times under SRPT than under PS/FBPS scheduling disciplines.
Furthermore, the asymptotic performance of SRPT is uniformly good
for the smaller as well as for the larger jobs, which implies that
the performance gains of smaller jobs with SRPT, compared to
PS/FBPS, are not achieved at the expense of larger jobs. Hence, in
this asymptotic heavy-tailed context, SRPT is both efficient and
fair, which complements similar findings obtained using the mean
value analysis.

However, as early as in the paper by \cite{SCMI66}, it was observed
that SRPT may be difficult to implement because of its complicated
preemptive nature that requires keeping track of the remaining
processing times for all the jobs in the queue. Thus, it is natural
to consider threshold-based static priority (SP) disciplines to
approximate SRPT, as suggested originally by \cite{SCMI66}, which
was then followed by a considerable number of later studies.
However, the main drawback of selecting static thresholds in
practice is that the real world traffic is often nonstationary,
highly  correlated, bursty, etc.

Our second main contribution in this paper is the design of a
scalable (low complexity) and adaptive comparison scheduling
approximation to SRPT. The good performance of our comparison
scheduler is demonstrated using our asymptotic queueing analysis
under the heavy-tailed service requirements; additional verification
of this scheduling algorithm was done by \cite{JKTSIG07} via
simulations. We also discuss refinements of our mechanism that, at
the expense of a small additional complexity, improve the accuracy
of job classification for correlated arrivals and highly
concentrated service distributions.

Finally, we would like to point out that, in addition to the static
priority discipline analyzed in our paper, it may also be
interesting to analyze the performance of our splitting mechanism
for other disciplines,  such as generalized processor sharing in
\cite{BBJ00c}, weighted fair queueing in \cite{BJC06}, and
hierarchical processor sharing.

\section*{Appendix}
 \subsubsection*{Proof of Lemma \ref{lemma:orderstat2}}
\label{sec:proof} Since the case $m=0$ is immediate, we assume that
$m\ge 1$. First, we show that the second asymptotics in
(\ref{eq:two}) holds assuming that $\{X_i\}_{0\le i\le m}$ are
continuous. In this case, we have $\Pr[X_i=X_j]=0, i\neq j$ and,
thus
\begin{align*}
\Pr[X_0>x, X_0< \tilde{X}_k]&={m\choose
k}\Pr\left[X_0>x,X_0\le\min_{1\le i\le
k}X_i\right]=\frac{1}{k+1}{m \choose k}\Pr\left[\min_{0\le i\le
k}X_i>x\right]\\&=\frac{1}{k+1}{m \choose k}\Pr[X_0>x]^{k+1}.
\end{align*}
Next, the first asymptotics in (\ref{eq:two}) is implied by the
preceding analysis and the following identity
\begin{align*}
\Pr[X_0>x, \tilde{X}_{k+1}\le X_0<
\tilde{X}_k]=\Pr[X_0>x,X_0<\tilde{X}_k]-\Pr[X_0>x,X_0<\tilde{X}_{k+1}].
\end{align*}

If $\{X_i\}_{0\le i\le m}$ are not continuous but in $\cal L$,
(\ref{eq:two}) still holds asymptotically. This claim will follow
from the preceding arguments if we show that for $X_i\in {\cal
L}$, as $x\rightarrow \infty$,
\begin{equation}
\label{eq:ht} \Pr[X_n>X_{n-1}\cdots>X_0>x]\sim \Pr[X_n\ge
X_{n-1}\cdots\ge X_0>x].
\end{equation}
Since $X_i\in {\cal L}$, it is enough to prove the preceding
relationship for $x\in \nat$.  Our proof starts with $n=1$,
\begin{align}\label{eq:induction1}
\Pr[X_0>x, X_0\le X_1]=\Pr[X_0>x, X_0<X_1]+\Pr[X_0>x, X_0=X_1].
\end{align}
Furthermore, for any $\epsilon>0$ and $x$ large,
\begin{align}
 \Pr[X_0>x, X_0=X_1]&=\sum_{y=x}^\infty\Pr[y<X_0\le
y+1, X_0=X_1, y<X_1\le y+1]\nonumber \\&\le
\sum_{y=x}^\infty(\Pr[y<X_0\le y+1])^2\nonumber\\
&=\sum_{y=x}^\infty\Pr[y<X_0\le y+1]
\frac{\Pr[X_0>y]}{\Pr[X_0>y+1]}\Pr[X_0>y+1]\nonumber\\
& \quad -\sum_{y=x}^\infty\Pr[y<X_0\le y+1]\Pr[X_0>y+1]\nonumber\\
&\le \epsilon\sum_{y=x}^\infty\Pr[y<X_0\le y+1]\Pr[X_0>y+1]
\label{eq:htp}\\&\leq \epsilon (\Pr[X_0>x])^2,\nonumber
\end{align}
where the last inequality is implied by the monotonicity of
$\Pr[X_0>y]$ and (\ref{eq:htp}) follows from $X_0\in{\cal L}$
since for any $\epsilon>0$, we can choose $x_0$ such that for
$y>x\ge x_0$,
$$\frac{\Pr[X_0>y]}{\Pr[X_0>y+1]}\le 1+\epsilon.$$
Combining (\ref{eq:induction1}), (\ref{eq:htp}), using the fact that
$\Pr[X_0>x, X_0<X_1]$ is of the same order as $(\Pr[X_0>x])^2$, and
passing $\epsilon \to 0$, yield the proof for $n=1$. Now,
for $n\geq 2$, we have
\begin{align*}
\Pr[X_{n}\geq X_{n-1} & \cdots\geq
X_0>x]=\Pr[X_{n}>X_{n-1}\geq\cdots\geq X_0>x]+\Pr[X_{n}=X_{n-1} \geq
\cdots\geq
X_0>x]\\
&\leq \Pr[X_{n}> X_{n-1} \geq \cdots \geq
X_0>x]+\Pr[X_{n}=X_{n-1}>x]\Pr[X_{n-2}\geq \cdots\geq X_0>x]\\
&= \Pr[X_{n}> X_{n-1} \geq \cdots \geq  X_0 >
x]+o\left(\Pr[X_0>x]^{n+1} \right),
\end{align*}
and by repeating the preceding procedure $n-1$ more times, we obtain
\begin{align*}
\Pr[X_{n}\geq X_{n-1}  \cdots\geq X_0>x] = \Pr[X_{n}> X_{n-1} >
\cdots
>  X_0 > x]+o\left(\Pr[X_0>x]^{n+1} \right).
\end{align*}
Noting that $\Pr[X_{n}> X_{n-1} > \cdots
>  X_0 > x]$ is of the same order as $\Pr[X_0>x]^{n+1}$ and
$\Pr[X_{n}\geq X_{n-1}  \cdots\geq X_0>x] \geq \Pr[X_{n}> X_{n-1} >
\cdots
>  X_0 > x]$, we finish the proof.
 \qed
 \subsubsection*{Proof of Lemma \ref{lemma:compare}}
Let $R_{10}(t)$ and $R_{20}(t)$ be the remaining service times at
time $t\geq 0$ for the labeled customer that arrives at $T_0$
under processes $\{(T_i,B_{1i}) \}_{i> -\infty}$ and
$\{(T_i,B_{2i}) \}_{i> -\infty}$, respectively. By the same
notion, we define $W_1(t)$ and $W_2(t)$ to be the workloads at
time $t$ in these two queues that need to be finished before the
labeled customer can start receiving its service. In order to
justify $V_1\leq V_2$, it is enough to prove that $R_{10}(t)\leq
R_{20}(t),t\geq 0$.

We use induction to prove the result and denote $\max (x,0)$ by
$x^+$. First, if $R_{10}(T_i)\le R_{20}(T_i)$ and $W_1(T_i+) \leq
W_2(T_i+)$, we have
\begin{align}\label{eq:lemma1p2}
 W_1(t) = \left( W_1(T_i+) - (t-T_i) \right)^+  &\le \left( W_2(T_i+) - (t-T_i) \right)^+ =W_2(t)  \nonumber\\
R_{10}(t)=R_{10}(T_i)-(t-T_i-W_1(T_i+))^+ &\le
R_{20}(T_i)-(t-T_i-W_2(T_i+))^+ = R_{20}(t)
\end{align}
 for $T_i\le t< T_{i+1}$. Note that $W_j(T_i+)$ and $W_j(T_i-)$ denote the right- and left-hand limits
 of $W_j(t)$ at $T_i$, respectively; i.e., the times right after and before the arrival at $T_i$.  Hence, it
is enough to prove that,  all the customers arriving at $T_i$,
$T_0\le T_i\le V_1$, see $R_{10}(T_i) \leq R_{20}(T_i)$ and
$W_1(T_i+) \leq W_2(T_i+)$ immediately after their arrival.

For the arrival at time $T_0$, the claim is obviously correct. Now,
assuming that the result holds for $i=n$, we proceed to prove it for
$i=n+1$. Based on the hypothesis, (\ref{eq:lemma1p2}) implies
$R_{10}(T_{n+1}) \leq R_{20}(T_{n+1})$ and $W_1(T_{n+1}-) \leq
W_2(T_{n+1}-)$ at the time immediately before $T_{n+1}$. Next, at
time $T_{n+1}$, if $B_{1(n+1)}=0<B_{2(n+1)}$, then, we have
\begin{equation*}
W_1(T_{n+1}+)=W_1(T_{n+1}-)\le W_2(T_{n+1}-)+B_{2(n+1)}\ind
\left\{B_{2(n+1)}<R_{20}(T_{n+1}) \right\}=W_2(T_{n+1}+),
\end{equation*}
since $W_1(T_{n+1}-) \leq W_2(T_{n+1}-)$.

The case  $B_{1(n+1)}=B_{2(n+1)}=B_{n+1}$  results in the following
three different scenarios:
\begin{enumerate}
\item[1)] If $B_{n+1}<R_{10}(T_{n+1})$, then
$$W_1(T_{n+1}+) =
W_1(T_{n+1}-)+ B_{n+1} \leq W_2(T_{n+1}-)+ B_{n+1}=W_2(T_{n+1}+),$$
since  $R_{10}(T_{n+1}) \leq R_{20}(T_{n+1})$ by induction
hypothesis.

\item[2)] If $B_{n+1}>R_{20}(T_{n+1})$, then
$$W_1(T_{n+1}+) =
W_1(T_{n+1}-)\leq W_2(T_{n+1}-)=W_2(T_{n+1}+).$$ \item[3)] If
$R_{10}(T_{n+1})\leq B_{n+1} \leq R_{20}(T_{n+1})$, then
$$W_1(T_{n+1}+) =
W_1(T_{n+1}-)\leq W_2(T_{n+1}-)+ B_{n+1}=W_2(T_{n+1}+).$$
\end{enumerate}
Therefore, the result holds for $i=n+1$, which completes the
induction, and implies that $V_2 \geq V_1$. \qed
\subsubsection*{Proof of Lemma \ref{lemma:refined}}
Without loss of generality we  assume that $b_1>\cdots>b_{\nu}$ and
$\min \{p_k\}_{1\leq i\leq \nu}>0$. Define $q_k \eqdef
\sum_{i=1}^{k}p_i$, $1\leq k \leq \nu$ with $q_0=0$ and choose
$m>\min \{1/p_k\}_{1\leq k \leq \nu} $.
 When $B_i=b_k$, we say $B_i$ is
routed into the \emph{right class} if $B_i$ is either in class
$\lfloor m q_{k-1} \rfloor$  or in class $\lceil m q_{k-1}-1 \rceil$
(note that if $mq_{k-1} \notin \nat$, then $\lfloor m q_{k-1}
\rfloor=\lceil m q_{k-1}-1 \rceil$).
 The
condition $m >\min \{ 1/p_k \}_{1\leq k\leq \nu}$ guarantees that if
$B_i \neq B_j$, then, the corresponding right classes for $B_i$ and
$B_j$ are different since $mp_k>1$ for all $1\leq k \leq \nu$.

 First,  since both
$\{O_i\}$ and $\{S_i\}$ are random, we construct a deterministic
sequence $\{d_i\}_{1\leq i\leq n}$ for comparison purposes as
follows: $d_i=b_k, \lfloor n q_{k-1} \rfloor+1 \leq  i \leq \lfloor
n q_{k} \rfloor$. Then,
\begin{align}\label{eq:refine5}
\Pr[ \eta(n) > \epsilon ]&\leq \Pr\left[\sum_{i=1}^{n} \ind\{O_i
\neq d_i\}
> \frac{\epsilon}{2} n \right]
       +\Pr \left[\sum_{i=1}^{n} \ind\{S_i \neq d_i\} > \frac{\epsilon}{2} n \right] \nonumber\\
       & \eqdef I_1+I_2.
\end{align}
For $I_2$, applying the union bound, we can easily prove that, for
some $H, \xi>0$,
\begin{equation}\label{eq:refine6}
I_2\leq H e^{-\xi
   n}. \end{equation} Therefore,  we only need to prove
that $I_1\leq He^{-\xi \min(l, n)}$, where $H, \xi$ may be
different from the ones chosen in (\ref{eq:refine6}).

Next, in order to evaluate $I_1$,  we denote the event $\mathcal
{E}_i=\{ \text {$B_i$ is not in the right class} \}$  and prove that
there exists $H, \xi>0$, such that as $n\to \infty$,
\begin{equation}\label{eq:refine2}
  \max_{1\leq i \leq \nu }\Pr[\mathcal {E}_i]\leq H e^{-\xi l}.
\end{equation}
To this end, if $\nu=1$, it is obvious that $\Pr[  \mathcal
{E}_i]=0$ for all $i$;
 if $\nu \geq 2$, noting that $\Pr[\mathcal {E}_i, B_i =
  b_1]=0$,  we have
\begin{align}\label{eq:refine3}
  &\Pr[\mathcal {E}_i]
  =\sum_{k=1}^{\nu-1}  \Pr[\mathcal {E}_i,  B_i =
  b_{k+1}],
\end{align}
where $\Pr[\mathcal {E}_i,   B_i =
  b_{k+1}]$ is upper bounded by
\begin{align}\label{eq:refineI12}
&\Pr \left[ \left\{
              \tilde{B}_{( i, \lfloor m q_k +1\rfloor l ) } \leq b_{k+1} < \tilde{B}_{(i,
              \lceil
   m q_k -1 \rceil l )}   \right\}^C \right]\nonumber\\
   &\;\; \leq \Pr \left[  b_{k+1}<
             \tilde{B}_{( i, \lfloor m q_k+1 \rfloor l ) }  \right]+ \Pr \left[
             b_{k+1} \geq
             \tilde{B}_{( i, \lceil m q_k-1 \rceil l ) }  \right]\nonumber \\
             &\;\; = \Pr \left[ \sum_{i=1}^{ml}\ind\{B_i<b_{k+1}\} >  \lfloor mq_k +1\rfloor
 l \right]+\Pr \left[ \sum_{i=1}^{ml}\ind\{B_i<b_{k+1}\} \leq  \lceil mq_k -1
             \rceil l
 \right].
\end{align}
By noting that $\expect[\ind\{B_i<b_{k+1}\}]=q_k, 1\leq k\leq
\nu-1$,  and  using the large deviation results with the condition
$\lfloor mq_k +1\rfloor > m q_k
> \lceil mq_k -1
             \rceil $, we obtain that for all $1\leq k \leq
\nu-1$ and some $H, \xi>0$, the righthand side of
(\ref{eq:refineI12}) is further bounded by $H e^{-\xi
   l}$.
By substituting this upper bound for (\ref{eq:refineI12}) into
(\ref{eq:refine3}), we prove (\ref{eq:refine2}), and therefore, the
total number of jobs
$$
N_{\epsilon} \eqdef  \sum_{i=1}^{n}  \ind\{ \mathcal{E}_i \}
$$ that are not in the right
classes satisfies, for $0< \delta <1$ and some $H_{\delta}, \xi>0$,
\begin{align}\label{eq:refine4}
   \Pr[N_{\epsilon}> \delta n]&=
      \Pr\left[ \sum_{i=1}^{n}  \ind\{ \mathcal{E}_i \}  >
      \delta n
      \right]\leq  \frac{\expect\left [\sum_{i=1}^{n}  \ind\{ \mathcal{E}_i \}  \right]}{\delta n}
  \leq  H_{\delta} e^{-\xi
   l}.
\end{align}

Now, we continue with evaluating $I_1$. Since
\begin{align}\label{eq:refine7}
I_1&\leq \sum_{k=1}^{\nu} \Pr\left[\sum_{i=1}^{n} \ind \{ O_i \neq
d_i, O_i = b_k \}
> \frac{\epsilon n}{2\nu} \right],
\end{align}
we only need to show that for each $1\leq k\leq \nu$ and some
$H_{\epsilon}, \xi_{\epsilon}>0$,
$$
\Pr\left[\sum_{i=1}^{n} \ind \{ O_i \neq d_i, O_i = b_k \}
> \frac{\epsilon n}{2\nu} \right]\leq H_{\epsilon} e^{-\xi_{\epsilon}
  \min( l,n)}.
$$
To this end, we define $E_n^{(k)}\eqdef \sum_{i=1}^{n} \ind\{ O_i
\neq d_i, O_i=b_k \}$ and
 denote by $N_k, 1\leq k\leq \nu$ the total
number of jobs of size $b_k$ and by $N_k^r$ the total number of jobs
of size $b_k$ that are routed into the right class with
$N_0=N_0^r=0$.  Obviously, by the definition of $N_{\epsilon}$, we
have $\left| \sum_{j=0}^{k} (N_j^r - N_j) \right|\leq N_{\epsilon}$
for $1\leq k\leq \nu$. Now, we claim that, for $1\leq k\leq \nu$,
\begin{equation}\label{eq:refine8}
E_n^{(k)}= \sum_{i=1}^{n} \ind\{ O_i \neq d_i, O_i=b_k \} \leq
\left| \sum_{j=0}^{k-1} N_{j}^r- \lfloor n q_{k-1} +1 \rfloor
 \right |
   + \left| \sum_{j=0}^{k} N_{j}^r- \lfloor n q_k \rfloor \right|  +
  2 N_{\epsilon}.
  \end{equation}

 In order to prove (\ref{eq:refine8}), we define $\mathcal{R}_k \subset \{1,2,\cdots, n\}$ to be the set
 of all the indices of the jobs in
 $\{O_i\}_{1\leq i \leq \nu}$ that are routed to the right classes for job size
 $b_k$. Now, if there is no element $i$ of $\mathcal{R}_k $ such that $O_i = b_k$, then the total number of
 jobs of size $b_k$ in $\{O_i\}_{1\leq i \leq \nu}$ is bounded by $N_{\epsilon}$ since
 none of
 the jobs of size $b_k$ are in the
 right classes. Thus, in this case we obtain
$$
   E_n^{(k)} \leq \sum_{i=1}^{n} \ind\{O_i = b_k \} \leq
   N_{\epsilon}.
$$
 Next, if
$\mathcal{R}_k $ contains at least one index $i$ such that $O_i =
b_k$, we can always define $\tau_k = \min\{i \in \mathcal{R}_k : O_i
= b_k \}$ and $\sigma_k = \max\{i \in \mathcal{R}_k : O_i = b_k \}$.
Then, let $\mathcal {A}=\{i \in {\cal N} \mid \tau_k \leq i \leq
\sigma_k \}$
  and $\mathcal {B}=\{i \in {\cal N} \mid \lfloor n q_{k-1}+1 \rfloor \leq i \leq \lfloor n q_{k}\rfloor \}$.
  It is easy to see that all the indices in $\mathcal {A}$ but not in $\mathcal {B}$
   are contributing to $E_n^{(k)}$ since $d_i \neq b_k$ for $i \in \mathcal{A}\backslash \mathcal{B}$, and therefore,
  $$E_n^{(k)} \leq \mid \mathcal {A}\backslash  \mathcal {B}\mid +
  N_{\epsilon},
  $$
 where $N_{\epsilon}$ contains all the errors $\ind\{ O_i \neq
d_i, O_i=b_k \}$ for $i \notin \mathcal{R}_k$. Here,
``$\backslash$'' represents set difference operation and $\mid \cdot
\mid$ denotes the cardinality of a set.
   To compute the cardinality of the preceding set difference, we have the following four different
  scenarios.
  \begin{itemize}
  \item if $\tau_k\leq \lfloor n q_{k-1}+1 \rfloor $ and $\sigma_k
    \leq \lfloor n q_{k}\rfloor$, then
$\mid \mathcal {A}\backslash \mathcal {B} \mid$ is upper bounded by
$\lfloor n q_{k-1} +1 \rfloor-\tau_k$,
 which, by noting that $\sum_{j=0}^{k-1} N_{j}^r<  \tau_k$,  results
 in
 $$E_n^{(k)} \leq \lfloor n q_{k-1} +1 \rfloor-\tau_k+N_\epsilon\leq  \left| \sum_{j=0}^{k-1} N_{j}^r- \lfloor n q_{k-1} +1 \rfloor
 \right |+N_\epsilon;$$
 \item if $ \tau_k \geq \lfloor n q_{k-1}+1 \rfloor$ and $\sigma_k \geq \lfloor n q_{k}\rfloor   $, then
$\mid \mathcal {A} \backslash \mathcal {B} \mid$ is upper bounded by
$\sigma_k-\lfloor n q_{k} \rfloor$.  By
  noting that $\sum_{j=0}^{k} N_{j}^r+N_{\epsilon}\geq
  \sigma_k$, we obtain
 $$E_n^{(k)} \leq \sigma_k-\lfloor n q_{k} \rfloor +N_\epsilon\leq  \sum_{j=0}^{k} N_{j}^r+N_{\epsilon}- \lfloor n q_{k} \rfloor+N_\epsilon;$$
    \item  if $\lfloor n q_{k-1}+1 \rfloor \leq \tau_k \leq \sigma_k
    \leq \lfloor n q_{k}\rfloor$, then, $\mid \mathcal {A} \backslash \mathcal {B} \mid=0$ and $E_n^{(k)}$ is upper bounded by the total number of jobs that
    are not in the right classes $N_{\epsilon}$;
 \item if $\tau_k  <  \lfloor n q_{k-1}+1 \rfloor  \leq  \lfloor n q_{k}\rfloor < \sigma_k$,
 then, we obtain $\mid \mathcal {A} \backslash \mathcal {B} \mid \leq \lfloor n q_{k-1} +1 \rfloor - \tau_k + \sigma_k
- \lfloor n q_{k} \rfloor$, which, by
 noting that $\sigma_k \leq \sum_{j=0}^{k} N_{j}^r+N_\epsilon$ and $\sum_{j=0}^{k-1} N_{j}^r<  \tau_k$,
yields
\begin{align}
E_n^{(k)} &\leq  \lfloor n q_{k-1} +1 \rfloor - \tau_k + \sigma_k -
\lfloor n
q_{k} \rfloor+N_{\epsilon} \nonumber\\
&\leq \left| \sum_{j=0}^{k-1} N_{j}^r- \lfloor n q_{k-1} +1
\rfloor
 \right|
   +  \sum_{j=0}^{k} N_{j}^r + N_{\epsilon} - \lfloor n q_k \rfloor +
   N_{\epsilon}.\nonumber \end{align}
  \end{itemize}
Therefore, by the above arguments, we prove the claim in
(\ref{eq:refine8}).

 Next, using (\ref{eq:refine8}),  for $1\leq
k\leq \nu$, we derive
\begin{align}
&\Pr\left[\sum_{i=1}^{n} \ind\{O_i \neq d_i, O_i=b_k\}
> \frac{\epsilon n}{2\nu} \right]\nonumber\\
 &\leq \Pr \left[  \left| \sum_{j=0}^{k-1} N_{j}^r- \lfloor n q_{k-1} +1 \rfloor
 \right |
   + \left| \sum_{j=0}^{k} N_{j}^r- \lfloor n q_k \rfloor \right|  + 2N_{\epsilon} > \frac{\epsilon n}{2 \nu}
   \right]\nonumber \\
&\leq \Pr \left[  \left| \sum_{j=0}^{k-1} N_{j}- \lfloor n q_{k-1}
+1 \rfloor
 \right |
   + \left| \sum_{j=0}^{k} N_{j}- \lfloor n q_k \rfloor \right|  + 4N_{\epsilon} > \frac{\epsilon n}{2 \nu}
   \right]\nonumber \\
   &\leq \Pr \left[ \left| \sum_{j=0}^{k-1} N_{j}- \lfloor n q_{k-1}
+1 \rfloor
 \right | > \frac{\epsilon n}{6 \nu} \right]+ \Pr\left[N_{\epsilon}> \frac{\epsilon n}{24 \nu} \right] \nonumber\\
& \quad  + \Pr \left[ \left| \sum_{j=0}^{k} N_{j}- \lfloor n q_k
\rfloor \right|
     > \frac{\epsilon n}{6 \nu} \right],
\nonumber
\end{align}
which, by noting that $\expect[N_j]=np_j$ for $1\leq j\leq \nu$ and
using Chernoff bound, (\ref{eq:refine7}) and (\ref{eq:refine4}),
implies that $I_1\leq He^{-\xi \min(l, n)}$ for some $H, \xi>0$.
Combining this bound, (\ref{eq:refine5}) and (\ref{eq:refine6}), we
complete the proof.
 \qed
\small
\bibliographystyle{apalike}

\end{document}